\newcommand{\typeof}{0}
\newcommand{\condinc}[2]{\ifthenelse{\equal{\typeof}{0}}{#1}{#2}}
\newcommand{\PR}{\textsf{PR}}
\newcommand{\diagonal}{\nabla}
\newcommand{\erase}{\varepsilon}
\newcommand{\proj}{\pi}
\newcommand{\Succ}{s}
\newcommand{\recutel}{\leadsto}
\newcommand{\recutelExp}{\leadsto_{\mathit{XWND}}}
\newcommand{\dsize}[2]{||#1||_{#2}}
\newcommand{\dlength}[2]{[#1]_{#2}}
\newcommand{\B}{\mathcal B}
\newenvironment{varitemize}
{
  \begin{list}{\condinc{\labelitemi}{\labelitemii}}
{
\setlength{\itemsep}{0pt}
 \setlength{\topsep}{0pt}
 \setlength{\parsep}{0pt}
 \setlength{\partopsep}{0pt}
 \setlength{\leftmargin}{15pt}
 \setlength{\rightmargin}{0pt}
 \setlength{\itemindent}{0pt}
 \setlength{\labelsep}{5pt}
 \setlength{\labelwidth}{10pt}}}
{
 \end{list}
}
\newenvironment{varnumlist}
{
\begin{list}{\arabic{number}.}
{\usecounter{number}
 \setlength{\itemsep}{0.0mm}
 \setlength{\topsep}{0.0mm}
 \setlength{\parindent}{0.0mm}
 \setlength{\parskip}{0.0mm}
 \setlength{\parsep}{0.0mm}
 \addtolength{\leftmargin}{-\labelsep}}}
{
 \end{list}
}
\newcommand{\eel}{\end{list} \vspace{\parskip}}
\newcommand{\linear}{\multimap}
\newcommand{\tuple}[1]{\langle #1\rangle}
\newcommand{\prcom}[2]{\circ[#1,#2]}
\newcommand{\prrec}[2]{\mathsf{rec}[#1,#2]}
\newcommand{\funone}{f}
\newcommand{\funtwo}{g}
\newcommand{\funthree}{h}
\newcommand{\prtopn}[1]{G_{#1}}
\newcommand{\N}{\mathbb{N}}
\newcommand{\SLL}{\textsf{SLL}}
\newenvironment{proof}[1][Proof.]{\begin{trivlist}
       \item[\hskip \labelsep {\bfseries #1}]}{\end{trivlist}}
\newenvironment{lemma*}[1][Lemma]{\begin{trivlist}
       \item[\hskip \labelsep {\bfseries Lemma #1}]\it}{\end{trivlist}}
\newenvironment{proposition*}[1][Proposition]{\begin{trivlist}
       \item[\hskip \labelsep {\bfseries Proposition #1}]\it}{\end{trivlist}}
\newenvironment{theorem*}[1][Theorem]{\begin{trivlist}
       \item[\hskip \labelsep {\bfseries Theorem #1}]\it}{\end{trivlist}}
\newtheorem{theorem}{Theorem}[section]
\newtheorem{lemma}[theorem]{Lemma}
\newtheorem{proposition}[theorem]{Proposition}
\newtheorem{corollary}[theorem]{Corollary}
\newtheorem{remark}[theorem]{Remark}
\newcounter{number}
\begin{document}
\title{Taming Modal Impredicativity:\\ Superlazy Reduction}
\condinc{
\author{Ugo Dal Lago\footnote{
Dip. di Informatica, Univ. di Bologna,
\texttt{http://www.cs.unibo.it/\~{}dallago/}
}, Luca Roversi\footnote{
Dip. di Informatica, Univ. di Torino,
\texttt{http://www.di.unito.it/\~{}rover/}
} and Luca Vercelli\footnote{
Dip. di Matematica, Univ. di Torino,
\texttt{http://www.di.unito.it/\~{}vercelli/}
}}
}
{
\author{Ugo Dal Lago\inst{1} \and Luca Roversi\inst{2} \and Luca Vercelli\inst{3}}
\authorrunning{Dal Lago, Roversi, Vercelli}
\institute{Dipartimento di Informatica, Universit\`a di Bologna
\email{http://www.cs.unibo.it/\homedir dallago/}
\and Dipartimento di Informatica, Universit\`a di Torino
\email{http://www.di.unito.it/\homedir rover/}
\and Dipartimento di Matematica, Universit\`a di Torino
\email{http://www.di.unito.it/\homedir vercelli/}
}}

\date{}
\maketitle
\begin{abstract}
Pure, or type-free, Linear Logic proof nets are Turing complete once
cut-elimination is considered
as computation. We introduce \emph{modal impredicativity} as a
new form of impredicativity causing the complexity of cut-elimination
to be problematic from a complexity point of view.
Modal impredicativity occurs when, during reduction, the conclusion of a
residual of a box $b$
interacts with a node that belongs to the proof net \emph{inside}
another residual of $b$.
Technically speaking, \emph{superlazy reduction} is a new notion of
reduction that allows to control modal impredicativity. More specifically,
superlazy reduction
replicates a box only when all its copies are opened.
This makes the overall cost of reducing a proof net finite and predictable.
Specifically, superlazy reduction applied to any pure proof nets takes
primitive recursive
time. Moreover, any primitive recursive function can be computed by a
pure proof net via superlazy reduction.\par\medskip

\textbf{Keywords.} Linear logic, implicit computational complexity, proof theory.
\end{abstract}

\tableofcontents
%%%%%%%%%%%%%%%%%%%%%%%%%%%%%
\section{Introduction}
\label{section:Introduction}
%%%%%%%%%%%%%%%%%%%%%%%%%%%%%
Predicativity is a logical concept known from a century, starting from Russel's work. It has
various technical meanings. All of them, however, refer implicitly or explicitly to some form
of aciclicity (see \cite{FefermanPredicativitySurvey} for an excellent survey). Impredicative
definitions or logical rules in a logical system may lead to logical paradoxes. On the other hand,
if logical systems are interpreted as programming languages (via the Curry-Howard correspondence), impredicativity may lead to 
type systems and programming languages with high expressive power.

In this paper, we introduce the notion  \emph{modal impredicativity}. We start from Linear Logic
which gives first-order status to
structural rules (on the logical side) and to duplication and erasure (on the computational side).
The very definition of modal impredicativity refers to \emph{boxes}, i.e., those portions
of proof nets, related to the modal meaning of formulae, that may be duplicated. During cut-elimination,
a duplication occurs when a box interacts with a contraction node, which corresponds to an instance of the structural
rule contraction in a logical derivation. Boxes allow to structure proofs into \emph{layers}: any rule instance has a
\emph{level}, the number of boxes into which it is contained.
Focusing our attention to boxes is the reason why our notion of impredicativity is dubbed as modal.
Specifically, modal impredicativity occurs when, during reduction, the \emph{root} of a residual of a box $b$ interacts with a node that belongs
to the proof net \emph{inside} another residual of $b$.
The paradigmatic example where an interaction of this kind occurs is in the (pure) proof net in Figure~\ref{fig:DeltaDelta},
which encodes the prototypical \emph{non normalizing} lambda-term $(\lambda x.xx)(\lambda x.xx)$. %and which represents a formula-free derivation of Linear Logic.
Call $b$ the (unique) box in Figure~\ref{fig:DeltaDelta}. After two reduction steps we get two copies
$b'$ and $b''$ of $b$. Two further reduction steps plug the root of $b'$ as premise of the node $X$ belonging to the second copy $b''$ of $b$.
This is a basic form of one-step-long cycle, since the content of $b$ interacts with the root of $b$ itself. Compared to what happens classically,
\emph{self-copying} plays the r\^ole of \emph{self-application} or \emph{self-definition}.
\begin{figure}[!ht]
\begin{center}
%   Delta Delta
  \begin{minipage}[c]{140pt}
    \centering\scalebox{0.6}{\epsfbox{figure.211}}
  \end{minipage}
\caption{The pure proof net $\Delta\Delta$.}
\label{fig:DeltaDelta}
\end{center}
\end{figure}
Notice that the cycles we are speaking about can have length greater than one. As an example, consider the
proof net corresponding to the lambda term $(\lambda x.\lambda y.xyx)(\lambda x.\lambda y.xyx)(\lambda x.\lambda y.xyx)$: it
includes two boxes $b_1$ and $b_2$ where $b_1$ copies $b_2$ and $b_2$ copies $b_1$.

Our long-term goal is to define proper restrictions on Linear Logic
allowing to control modal impredicativity. This paper is just the first
step towards this goal. What we define here is a new notion of \emph{reduction} for Linear Logic proof nets
which rules out the previously described cyclic phenomenon dynamically, i.e. at the level of the graph-theoretic
rewriting relation which governs proof net reduction.
This way, we break impredicative cycles, while keeping the
freedom of \emph{statically} compose pure proof nets. %so preserving expressive programming power. (qui l'ho tolta, perché mi pare troppo ottimistica).
%%%%%%%%%%%%%%%%%%%%%%%%%%%%%%%%%%%%%%%%%%%%%%%%%%%%%%%
\paragraph{Light Logics and Modal Impredicativity.}
%%%%%%%%%%%%%%%%%%%%%%%%%%%%%%%%%%%%%%%%%%%%%%%%%%%%%%%
We now recall how the known subsystems of Linear Logic, introduced as characterization of certain complexity classes, work.
Let us call them light logics, for short. Proof-theoretically, they parsimoniously use the contraction rule.
On the computational side, they control the duplication of structure.
Technically, we currently know two ways of controlling the use of duplication.
One is \emph{stratification}. The other one is what we like to call \emph{boundedness}.

\emph{Stratification} is a structural constraint that, at the dynamic level, has the following meaning:
one reduction step at level $n$ can only increase the complexity
of the underlying proof at levels (strictly) higher than $n$. This is achieved by dropping dereliction
and digging as logical rules.
The consequence is the control over the dimension of every single reduct, that reflects on the
overall control of reduction time. The mechanism is implicit in the structural and combinatorial properties of
proofs and is totally independent from its logical soundness. In stratified systems, the level of any node \emph{cannot}
change during reduction. As a consequence, any stratified system, by definition, cannot be modally impredicative because
the nodes inside a box $b$ cannot interact with the root of any copy of $b$.
Elementary Linear Logic, Light Linear Logic~\cite{Girard98ic} and their affine versions use \emph{stratification}.

Concerning boundedness, recall that in ordinary Linear Logic, $!A$ is semantically equivalent to $\bigcup_{n\in\mathbf{N}} \underbrace{(A\otimes \ldots\otimes A)}_{n}$.
\emph{Boundedness} refers to various methodologies that, informally, put
$!A$ in correspondence to a \emph{finite} subset of
$\bigcup_{n\in\mathbf{N}} \underbrace{(A\otimes \ldots\otimes A)}_{n}$.
Computationally, this means the number of copies of each box in a proof can somehow
be statically or dynamically predicted, i.e. bounded.
This way, we automatically get a system that cannot be modally impredicative, since in bounded systems
$!A$ cannot be equal to $!A\otimes A$ (and this principle seems necessary to have
self-copying).
Soft Linear Logic~\cite{Lafont04tcs} and Bounded Linear Logic~\cite{GirardScedrowScott92tcs} use \emph{boundedness}.

%%%%%%%%%%%%%%%%%%%%%%%%%%%%%%%%
\paragraph{Superlazy Reduction and Modal Impredicativity.}
%%%%%%%%%%%%%%%%%%%%%%%%%%%%%%%%
\emph{Superlazy reduction} is a new notion of reduction for
Linear Logic (pure) proof nets. It is specifically designed to control modal impredicativity.
Under superlazy reduction, any box $b$ can interact with a tree of contraction, dereliction, digging and weakening nodes
only if the global result of this interaction somehow reduces the overall complexity
of the proof net, namely when it produces (possibly many) ``open'' copies of $b$.
If this is not the case, reduction is blocked and cannot be performed. This way modal impredicativity is automatically ruled out,
since whenever the \emph{content} of $b$ is copied, $b$ \emph{as a box} is destroyed and no residuals of
$b$ are produced. Technically, this is ensured by prescribing that
reduction can happen only when the box is faced with a \emph{derelicting tree of nodes}, a key notion introduced in
Section~\ref{section:Soundness}.
%%%%%%%%%%%%%%%%%%%%%%%%%%%%%%%
\paragraph{Superlazy Reduction and Primitive Recursion.}
%%%%%%%%%%%%%%%%%%%%%%%%%%%%%%%%
The calculus we obtain by adopting superlazy reduction over pure proof nets is still
powerful enough to characterize the class \PR\ of primitive recursive functions.
We show the characterization under a standard pattern.
As for \emph{soundness}, we prove that every pure proof net $G$ can be rewritten to its normal form in time
bounded by a primitive recursive function in the size of $G$.
This is remarkable by itself, since the mere fact that superlazy reduction \emph{computes
something} is interesting by itself, considered the strong requirements superlazy reduction must satisfy.
As for \emph{completeness}, every function in \PR\ can be represented as a pure proof net, even under superlazy reduction.
%%%%%%%%%%%%%%%%%%%%%%%%%%%%%%%
\paragraph{Superlazy Reduction and Expressive Power.}
%%%%%%%%%%%%%%%%%%%%%%%%%%%%%%%%
We here want to make some observations about pure proof nets and superlazy reduction as 
a paradigmatic programming language.
The set of terms we can program with are pure proof nets coming from Linear Logic.
Namely, we can use every lambda-term as a program, $\Delta\Delta$ (Figure~\ref{fig:DeltaDelta}) included.
However, we do not have the standard unconstrained reduction steps, which, by simulating the usual beta-reduction,
allow to embed pure, i.e. untyped, lambda-calculus into pure proof nets.
In particular, the proof net $\Delta\Delta$ is normal in our setting,  the reason being that we can never reach
a situation where the ``amount'' of open copies of the box it contains is known in advance.
%%%%%%%%%%%%%%%%%%%%%%%%%%%%%%%
\paragraph{Related Work.}
%%%%%%%%%%%%%%%%%%%%%%%%%%%%%%%
Several authors have used some notion of predicativity as a way to restrict the expressive power of
programming languages or logical systems. We here recall some of them, without
any hope of being exhaustive.
In \cite{LeivantDanner99MSCS}, as a first example, Danner and Leivant presented a variant
of the second order $\lambda$-calculus obtained imposing a restriction on the
second-order quantification. Such a restriction has semantic flavor: all the \emph{types}
have a \emph{rank}, an ordinal number\condinc{; the type $\forall R:\xi.\tau$ is a legal type
where $R$ can be replaced with every other type \emph{of rank at most $\xi$}. Of course the
formula $\forall R:\xi.\tau$ has rank higher than $\xi$, so a form of predicativity is reached.
However, this is not enough to limit the expressive power: the maximum possible rank must be
limited \emph{a priori}.}{ and universal quantification can only be instantiated if the witness
has a proper rank.} This way they get a characterization of primitive recursion. Another, earlier,
example is Leivant's predicative recursion~\cite{Leivant93popl}: if predicativity is imposed
on ordinary primitive recursion (on any word algebra), one gets a characterization of polynomial time
computable functions. Further work shows how other classes can be characterized with similar
tools~\cite{Leivant99apal,Leivant00tcs}. A final example is Simmons' fine analysis of tiering~\cite{Simmons05bsl}. All the cited proposals,
however, share a property which make them fundamentally different from superlazy reduction: predicativity
is enforced through static constraints, i.e., constraints on programs rather than on the underlying
reduction relation. The first author has recently proposed a characterization of primitive recursion
by a fragment of G\"odel's $\mathsf{T}$~\cite{Dallago05lics}.

On the other hand, restricted notions of reduction on Linear Logic proof nets have appeared
in the literature. This includes, for example, Girard's closed reduction~\cite{Girard89lc} or head linear
reduction~\cite{Mascari94tcs}. None of them, however, decreases the expressive power of the logical systems
on top of which they are applied, as superlazy reduction does.

%We are interested in more \emph{syntactical} ways of controlling impredicativity.
%%%%%%%%%%%%%%%%%%%%%%%%%%%%%%
\paragraph{Paper Outline.}
%%%%%%%%%%%%%%%%%%%%%%%%%%%%%%
Section~\ref{section:Pure proof nets} recalls pure, i.e., untyped, proof nets and defines what are derelicting trees and
superlazy reduction.
Section~\ref{section:Soundness} proves the primitive recursive soundness of superlazy
reduction on pure proof nets.
Section~\ref{section:Completeness} shows that even under superlazy reduction, pure proof nets remain expressive
enough to represent all the primitive recursive functions.
Section~\ref{section:FurtherDevelopments} presents some further developments on the ideas presented in the paper.
\condinc{}{An extended version of this paper including all proofs is available~\cite{longversion}.}

%%%%%%%%%%%%%%%%%%%%%%%%%%%%%
\section{Pure Proof Nets}
\label{section:Pure proof nets}
%%%%%%%%%%%%%%%%%%%%%%%%%%%%%
\condinc
{Pure proof nets are graph-like structures built using
the nodes in Figure~\ref{fig:nodes} and
the inductive clauses in Figure~\ref{fig:baseinduction-a}, \ref{fig:baseinduction-b} and~\ref{fig:induction}.}
{Pure proof nets are graph-like structures corresponding to proofs. The set of labels for the nodes is
$\{P$, $C$, $W$, $X$, $R_{\linear}$, $L_{\linear}$, $R_{\otimes}$, $L_{\otimes}$, $R_{!}$, $L_{!}$, $D$, $N \}$.
All nodes, but $P$ and $C$, correspond to the usual proof net labels. We use $P$ and $C$ for the sake of uniformity, getting graphs
without dangling edges.}
Figure~\ref{fig:baseinduction-a} says that a wire is a proof net. Given the two proof nets in~\ref{fig:baseinduction-b} we can build
those in Figure~\ref{fig:induction}.
The inductive rule at the end of Figure~\ref{fig:baseinduction-a} introduces (modal) \emph{boxes}.

%%%
\condinc{
\begin{figure}
\begin{center}
  \begin{minipage}[c]{\textwidth}
    % Nodes
    \centering\condinc{\scalebox{0.6}}{\scalebox{0.4}}{\epsfbox{figure.200}}
  \end{minipage}
\caption{The nodes of the proof nets.}
\label{fig:nodes}
\end{center}
\end{figure}
}{} % The nodes are shown only in the long paper
\begin{figure}
\begin{center}
\subfigure[]{
  \begin{minipage}[c]{13.2pt}
    % Rule I
    \centering\condinc{\scalebox{0.6}}{\scalebox{0.6}}{\epsfbox{figure.8}}
  \end{minipage}
\label{fig:baseinduction-a}
}
\hspace{40pt}
\subfigure[]{
  \begin{minipage}[c]{49.8pt}
    % Net G
    \centering\condinc{\scalebox{0.6}}{\scalebox{0.6}}{\epsfbox{figure.2}}
  \end{minipage}
  \hspace{20pt}
  \begin{minipage}[c]{49.8pt}
    % Net H
    \centering\condinc{\scalebox{0.6}}{\scalebox{0.6}}{\epsfbox{figure.58}}
  \end{minipage}
\label{fig:baseinduction-b}
}
\caption{Base cases.}
\label{fig:baseinduction}
\end{center}
\end{figure}
\begin{figure*}
\begin{center}
  \begin{minipage}[c]{103.2pt}
    % Rule T
    \centering\condinc{\scalebox{0.6}}{\scalebox{0.6}}{\epsfbox{figure.59}}
  \end{minipage} \hspace{50pt}
  \begin{minipage}[c]{67.2pt}
    % Rule W
    \centering\condinc{\scalebox{0.6}}{\scalebox{0.6}}{\epsfbox{figure.10}}
  \end{minipage}\\\vspace{10pt}
  \begin{minipage}[c]{67.2pt}
    % Rule C
    \centering\condinc{\scalebox{0.6}}{\scalebox{0.6}}{\epsfbox{figure.25}}
  \end{minipage} \hspace{50pt}
  \begin{minipage}[c]{49.2pt}
    % Rule R_-o
    \centering\condinc{\scalebox{0.6}}{\scalebox{0.6}}{\epsfbox{figure.3}}
  \end{minipage}\\\vspace{10pt}
  \begin{minipage}[c]{121.2pt}
    % Rule L_-o
    \centering\condinc{\scalebox{0.6}}{\scalebox{0.6}}{\epsfbox{figure.4}}
  \end{minipage}\hspace{20pt}
  \begin{minipage}[c]{103.2pt}
    % Rule R_\otimes
    \centering\condinc{\scalebox{0.6}}{\scalebox{0.6}}{\epsfbox{figure.17}}
  \end{minipage}\hspace{20pt}
  \begin{minipage}[c]{67.2pt}
    % Rule L_\otimes
    \centering\condinc{\scalebox{0.6}}{\scalebox{0.6}}{\epsfbox{figure.16}}
  \end{minipage}\\\vspace{10pt}
  \begin{minipage}[c]{73.2pt}
    % Rule R_!
    \centering\condinc{\scalebox{0.6}}{\scalebox{0.6}}{\epsfbox{figure.7}}
  \end{minipage}\hspace{40pt}
  \begin{minipage}[c]{67.2pt}
%     Rule D
    \centering\condinc{\scalebox{0.6}}{\scalebox{0.6}}{\epsfbox{figure.12}}
  \end{minipage} \hspace{40pt}
  \begin{minipage}[c]{67.2pt}
%     Rule N
    \centering\condinc{\scalebox{0.6}}{\scalebox{0.6}}{\epsfbox{figure.13}}
  \end{minipage} \hspace{20pt}
\end{center}
\caption{Inductive cases}
\label{fig:induction}
\end{figure*}
Please notice that the proof nets introduced here are slightly different from the usual ones. In particular, there is
not any explicit node playing the r\^ole of the cut rule or of axioms. Moreover, proof net conclusions are partitioned into one proper
conclusion and some premises. This way, proof nets get an intuitionistic flavor which makes the correspondence with lambda-terms more evident\condinc{; this will be useful in Section~\ref{section:Completeness}}{}.

\paragraph{Reduction Rules.}
\condinc{
  The reduction rules for the proof nets are in Figure~\ref{fig:linearrules}, and~\ref{fig:exporules}.
}{
  The reduction rules for the proof nets are the usual ones. We omit the obvious linear rules
  $\longrightarrow_\linear, \longrightarrow_\otimes$, and we just recall the
  \emph{modal} reduction rules in Figure~\ref{fig:exporules}.
} %condinc
Call $\longrightarrow$ the contextual closure of the rewriting steps
$\longrightarrow_\linear, \longrightarrow_\otimes, \longrightarrow_D, \longrightarrow_X, \longrightarrow_W, \longrightarrow_N, \longrightarrow_M$.
The reflexive and transitive closure of $\longrightarrow$ is $\longrightarrow^*$.
\condinc{
\begin{figure}
\begin{center}
  % Rule R_-o<->L_-o
  \begin{minipage}[c]{34pt}
    \centering\condinc{\scalebox{0.6}}{\scalebox{0.4}}{\epsfbox{figure.14}}
  \end{minipage}
  \begin{minipage}[c]{25pt}
    \centering $\longrightarrow_\linear$
  \end{minipage}
  \begin{minipage}[c]{41.5pt}
    \centering\condinc{\scalebox{0.6}}{\scalebox{0.4}}{\epsfbox{figure.15}}
  \end{minipage}
  \hspace{20pt}
  % Rule R_\otimes<->L_\otimes
  \begin{minipage}[c]{33pt}
    \centering\condinc{\scalebox{0.6}}{\scalebox{0.4}}{\epsfbox{figure.54}}
  \end{minipage}
  \begin{minipage}[c]{20pt}
    \centering $\longrightarrow_\otimes$
  \end{minipage}
  \begin{minipage}[c]{29pt}
    \centering\condinc{\scalebox{0.6}}{\scalebox{0.4}}{\epsfbox{figure.55}}
  \end{minipage}
  \hspace{20pt}
\caption{Linear graph rewriting rules.}
\label{fig:linearrules}
\end{center}
\end{figure}

}{} %condinc

%%%%
\begin{figure}
\begin{center}
  % Rule R_P<->X
  \begin{minipage}[c]{.45\textwidth}
    \centering\condinc{\scalebox{0.5}}{\scalebox{0.4}}{\epsfbox{figure.443}}
  \end{minipage}
  \begin{minipage}[c]{20pt}
    \centering $\longrightarrow_X$
  \end{minipage}
  \begin{minipage}[c]{.45\textwidth}
    \centering\condinc{\scalebox{0.5}}{\scalebox{0.4}}{\epsfbox{figure.453}}
  \end{minipage}
\\\vspace{10pt}
  % Rule R_P<->W
  \begin{minipage}[c]{.45\textwidth}
    \centering\condinc{\scalebox{0.5}}{\scalebox{0.4}}{\epsfbox{figure.21}}
  \end{minipage}
  \begin{minipage}[c]{20pt}
    \centering $\longrightarrow_W$
  \end{minipage}
  \begin{minipage}[c]{.45\textwidth}
  \centering\condinc{\scalebox{0.5}}{\scalebox{0.4}}{\epsfbox{figure.22}}
  \end{minipage}
\\\vspace{10pt}
  % Rule R_P<->D
  \begin{minipage}[c]{.45\textwidth}
    \centering\condinc{\scalebox{0.5}}{\scalebox{0.4}}{\epsfbox{figure.40}}
  \end{minipage}
  \begin{minipage}[c]{20pt}
    \centering $\longrightarrow_D$
  \end{minipage}
  \begin{minipage}[c]{.45\textwidth}
    \centering\condinc{\scalebox{0.5}}{\scalebox{0.4}}{\epsfbox{figure.41}}
  \end{minipage}
  \hspace{30pt}
  % Rule R_P<->N
  \begin{minipage}[c]{.45\textwidth}
    \centering\condinc{\scalebox{0.5}}{\scalebox{0.4}}{\epsfbox{figure.42}}
  \end{minipage}
  \begin{minipage}[c]{20pt}
    \centering $\longrightarrow_N$
  \end{minipage}
  \begin{minipage}[c]{.45\textwidth}
    \centering\condinc{\scalebox{0.5}}{\scalebox{0.4}}{\epsfbox{figure.43}}
  \end{minipage}
  \hspace{30pt}\\\vspace{10pt}
  % Rule R_P<->R_P
  \begin{minipage}[c]{.45\textwidth}
    \centering\condinc{\scalebox{0.5}}{\scalebox{0.4}}{\epsfbox{figure.23}}
  \end{minipage}
  \begin{minipage}[c]{20pt}
    \centering $\longrightarrow_M$
  \end{minipage}
  \begin{minipage}[c]{.45\textwidth}
    \centering\condinc{\scalebox{0.5}}{\scalebox{0.4}}{\epsfbox{figure.24}}
  \end{minipage}
  \hspace{30pt}
\caption{Modal graph rewriting rules.}
\label{fig:exporules}
\end{center}
\end{figure}
%%%
The reduction rules $\longrightarrow_X$ and $\longrightarrow_N$ are the only ones somehow increasing
the size of the underlying proof-net: the first one copies a box, while the second one puts a box
inside another box. Superlazy reduction, as we will see shortly, does not simply eliminate those
rules, but rather forces them to be applicable only in certain contexts, i.e., only when those rules
are part of a sequence of modal rewriting rules which have a globally predictable behavior.

\subsection{Superlazy Reduction}
We shall be able to prove a soundness result about the cost of the reduction of the proof nets relatively to
a \emph{superlazy} version of $\longrightarrow$ that requires the notion of \emph{derelicting tree}.

\paragraph{Derelicting trees.}
For every proof net $G$, let us assume that the \emph{cost} of traversing any $X$-node of $G$ is $0$,
any $D$-node is $-1$, any $W$-node is $0$ and any $N$-node is $+1$.
The cost of a path from node $u$ to node $v$ is the sum of the costs of nodes in the path \emph{including $u$ and $v$}.
A \emph{derelicting tree} in $G$ is a subgraph $t$ of $G$ that satisfies the following four conditions:
\begin{varnumlist}
\item
  $t$ only contains nodes $X, D, N, W$; so it must be a tree, and we call $w$ its root;
\item
  the leaves of $t$ are labelled either with $D$ or with $W$;
\item
  for every leaf $v$ labelled with $D$ in $t$, the cost of
  the path from $w$ to $v$ in $t$ is $-1$;
\item
  the cost of any other path in $t$ starting from $w$ is nonnegative.
\end{varnumlist}
Figure~\ref{fig:Derelicting tree-a} shows an example of a derelicting tree. Conditions 1. and 2. are trivially
satisfied.
%Suppose, now to name the nodes of the tree in Figure~\ref{fig:Derelicting tree-a} with a left-to-right and 
%level-by-level strategy, starting from $v_1$ as name of the root.
The cost of $v_1v_2v_4v_7v_{10}$ is $-1$ and the same for $v_1v_3v_6$. Finally, any other path
starting from $v_1$ has nonnegative cost. For example, $v_1v_3v_5v_8$ has cost $1$.

Figure~\ref{fig:Derelicting tree-b} depicts a remarkable instance of derelicting tree: an $n$-\emph{bounded spine} with $n$
occurrences of $X$ nodes that we shall represent as a dashed box with name $nX$.
%%%
\begin{figure}[!ht]
\begin{center}
  %%%%% Derelicting trees
  \subfigure[]{
  \begin{minipage}[c]{111pt}
    \centering\condinc{\scalebox{0.6}}{\scalebox{0.6}}{\epsfbox{figure.85}}
  \end{minipage}
  \label{fig:Derelicting tree-a}
  }
  \hspace{30pt}
  %%%%% Spine: bounded
  \subfigure[]{
  \begin{minipage}[c]{85.2pt}
    \centering\condinc{\scalebox{0.6}}{\scalebox{0.6}}{\epsfbox{figure.1001}}
  \end{minipage}
  \label{fig:Derelicting tree-b}
  }
\caption{A generic derelicting tree (a) and a bounded spine (b)}
\end{center}
\end{figure}

%%%%%%%%%%%%%%%
\paragraph{Superlazy Reduction Step and Rewriting System on Pure Proof Nets.}
The \emph{superlazy normalization step} is $\rightarrow_{\mathit{XNDW}}$, defined in Figure~\ref{fig:lazycutelim}, $\nabla t$ 
being any derelicting tree.
%%%%%
\begin{figure}[!ht]
\begin{center}
  % Rule R_P<->X
  \begin{minipage}[c]{61.2pt}
    \centering\condinc{\scalebox{0.6}}{\scalebox{0.6}}{\epsfbox{figure.441}}
  \end{minipage}
  \begin{minipage}[c]{40pt}
    \centering $\rightarrow_{\mathit{XNDW}}$
  \end{minipage}
  \begin{minipage}[c]{114.6pt}
    \centering\condinc{\scalebox{0.6}}{\scalebox{0.6}}{\epsfbox{figure.451}}
  \end{minipage}
\caption{Superlazy cut elimination step}
\label{fig:lazycutelim}
\end{center}
\end{figure}
%%%%%%%%%%%%%%%%%%%%%%%%%%%%%%%
%The \emph{superlazy rewriting system} is $\recutel$.
$\recutel_{\linear}, \recutel_{\otimes}, \recutel_M, \recutel_{\mathit{XNDW}}$
denote the \emph{surface} contextual closure of the rewriting steps
$\longrightarrow_\linear, \longrightarrow_\otimes, \longrightarrow_M, \longrightarrow_{\mathit{XNDW}}$.
``Surface'' means that we never apply a reduction inside a box.
$\recutel$ is the union of $\recutel_{\linear}, \recutel_{\otimes}, \recutel_M, \recutel_{\mathit{XNDW}}$.
The reflexive and transitive closure of $\recutel$ is $\recutel^*$.

Superlazy reduction is a \emph{very} restricted notion of reduction. In particular, it
is almost useless when applied to proof nets obtained from ordinary lambda-terms via the usual,
uniform encoding. In particular:
\begin{varitemize}
\item
  If terms are encoded via the so-called call-by-name encoding (i.e., the one induced
  by Girard's correspondence $\left(A\rightarrow B\right)^\circ\equiv {!A^\circ}\multimap B^\circ$), then
  any redex $(\lambda x.M)N$ where $M$ consists of an application $LP$ and
  $x$ appears free in $P$, cannot be reduced in the corresponding proof net:
  a box would be faced with something different from a derelicting tree.
\item
  On the other hand, if terms are encoded via the call-by-value encoding (i.e. via
  the correspondence $\left(A\rightarrow B\right)^\circ\equiv{!(A^\circ\multimap B^\circ)}$), then
  any redex $(\lambda x.M)N$ where $M$ consists itself of an abstraction $\lambda y.L$
  and $x$ appears free in $L$ cannot be reduced in the corresponding proof net.
\end{varitemize}
Unfortunately, we do not even know any criteria allowing to guarantee that
certain proof nets can be reduced to normal form (w.r.t. ordinary reduction)
by way of superlazy reduction. Moreover, there currently isn't any result
characterizing the class of normal forms w.r.t. superlazy reduction; this is in
contrast, for example, to lambda calculus and call-by-value reduction, where
the cbv normal form of any (closed) term $M$ (if any) is always a value.
This is why proving that proof nets are complete w.r.t.
some given class of functions, under the superlazy reduction, is non-trivial.
Nonetheless, we explicitly prove the completeness of superlazy reduction (see
Section~\ref{section:Completeness}).
The next section shows why both
$\left(A\rightarrow B\right)^\circ\equiv {!A^\circ}\multimap B^\circ$ and  $\left(A\rightarrow B\right)^\circ\equiv{!{(A^\circ\multimap B^\circ)}}$
do not work well here: superlazy reduction of any proof net always terminates in a
time bounded by suitable primitive recursive functions.

Linear Logic with superlazy reduction can be seen as a generalization of the principles of Soft Linear Logic.
Every time a box is replicated, that box is opened.
According to this vision, a derelicting tree with $m$ leaves is similar to a multiplexor node of rank $m$.
However, the structure of \SLL\ gives a further restriction: if $k$
is the rank of the proof net $G$, that is the maximum rank of the multiplexers in $G$, we are sure that every box of $G$ will be copied at most $k$ times. Such a restriction leads to a polytime bound (see \cite{Lafont04tcs}).
%%%%%%%%%%%%%%%%%%%%%%%%%%%%%
\section{Soundness}
\label{section:Soundness}
%%%%%%%%%%%%%%%%%%%%%%%%%%%%%
\condinc{ %dimostrazione con residui*
	Let's begin with some definitions.
        Given a proof $G$ and any reduction relation $\rightarrow$,
        $\dlength{G}{\rightarrow}$ and $\dsize{G}{\rightarrow}$ denote the maximum
        length of a reduction sequence starting in $G$ (under $\rightarrow$)
        and the maximum size of any reduct of $G$ (under $\rightarrow$), respectively.
        $|G|$ is the size of a proof net $G$.
	If $G\recutel^*F$ and $b^*$ is a box in $F$,
	then $b^*$ is the {merge} of the residuals of one or more boxes
	$b_1',\ldots,b_k'$ in $G$; in this case we say that $b^*$ is the
	\emph{residual* of $\{b_1',\ldots,b_k'\}$}. For shortness we say also that $b^*$ is the residual* of $b_1'$,
	meaning that $b^*$ is the residual* of $b_1'$ plus some other boxes.
	If $b$ is a box in $G$, we denote $\B_G(b)=$ $\{b\}\cup\{d\mid d\mbox{ is a box of $G$ contained in }b\}.$
	A reduction sequence
	$\sigma\equiv G\recutel G'\recutel\ldots\recutel H$ is said to be a
	\emph{$(n,d)$-box reduction} if there are $r\leq n$ boxes $b_1,\ldots,b_r$
	between those at level $0$ in $G$ such that the following conditions hold:
	\begin{varitemize}
		\item
                  The nodes of the proof nets inside every box $b_1,\ldots,b_r$ are at most at depth $d$.
		\item
		  The box $d^*$ involved in any step $\recutel_{\mathit{XNDW}}$ of $\sigma$ is a residual*
		  of some boxes $\{d_1,\ldots,d_k\}$ of $G$ where for every $i\leq k$ there exists $j\leq r$ such that
		  $d_i\in \B_G(b_j)$.
		\item $r$ is the least with the above properties.
	\end{varitemize}

\begin{remark}
A few observations about the above definition:
\begin{varnumlist}
\item
  $G$ may contain more than $n$ boxes at level $0$, or it may contain boxes whose depth is greater than $d$;
\item
  $H$ is \emph{not} necessarily a normal form. It may contain boxes at level $d$, or higher;
\item
  By definition, any node inside the boxes $b_1,\ldots,b_r$ \emph{must have} depth at least $1$. This means that if $\sigma$ is a \emph{$(n,0)$-box reduction} it is, in fact, talking about at most $n$ boxes that contain nodes at level $0$. They cannot exist, so $\sigma$ only uses the linear rewriting steps $\longrightarrow_\linear, \longrightarrow_\otimes$;
\item
  By definition, \emph{every} reduction starting at any net $G$ is a $\left(|G|,\partial(G)\right)$-box reduction because $|G|$ necessarily bounds the number $n$ of boxes, and $\partial(G)$ bounds the value $d$;
\item
	If $b^*$ is a box involved in a $\recutel_{\mathit{XNDW}}$ step and
	it is a residual* of some box $b$ in $G$ at level 0, then
	$b=b_i$ for some $i$ by definition. More important, also the converse holds:
	for each $i$ there exists a residual*	$b_i^*$ of $b_i$ along the reduction
	$G\recutel^*H$	that is involved in a $\recutel_{\mathit{XNDW}}$
	step; that is, $b_i$ will eventually be opened.
\end{varnumlist}
\end{remark}

Now we define a family of functions $\{ f_d:\N\times\N\rightarrow \N\}_{d\in\N}$ such that every $f_d(n,m)$ bounds
\emph{both} the reduction cost \emph{and} the size of
the reducts, of every net $G$ when performing a $(n,d)$-box reductions on it:
	\begin{eqnarray*}
		f_0(n,m)&=&m \\
		f_{d+1}(0,m)&=&m\\
		f_{d+1}(n+1,m)&=& 1+f_{d+1}(n,m)+f_{d}\left(2f_{d+1}(n,m)^2,2f_{d+1}(n,m)^2\right)
	\end{eqnarray*}

By definition, all the functions $f_d$ are primitive recursive.
\begin{lemma}
For every $d,n,m\in\N$, $f_{d+1}(n,m)\geq f_d(n,m)$ and $f_d(n+1,m)\geq f_d(n,m)$.
\end{lemma}
The proof is by induction on $n$. We can prove the following:
\begin{proposition}\label{prop:mainbound}
Let $G$ be any proof net and let $G\recutel^kH$ be a $(n,d)$-box reduction
with $k$ steps. Then $k,|H|\leq f_{d}(n,|G|)$.
\end{proposition}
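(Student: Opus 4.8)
The plan is to prove the bound by a lexicographic double induction, with the outer induction on the depth $d$ and the inner induction on the number $n$ of top-level boxes, so that the two recursive calls in the definition of $f_{d+1}(n+1,m)$ are matched, respectively, by the inner and the outer induction hypotheses. For the \emph{base cases}: when $d=0$, the Remark tells us that a $(n,0)$-box reduction contains no $\recutel_{\mathit{XNDW}}$ step and hence only uses $\recutel_\linear$ and $\recutel_\otimes$; since each such step strictly decreases the size of the net, we get $k,|H|\leq |G| = f_0(n,|G|)$. When $n=0$ and $d$ is arbitrary, the condition $r=0$ again forbids any $\recutel_{\mathit{XNDW}}$ step, so $\sigma$ only uses $\recutel_\linear,\recutel_\otimes,\recutel_M$; none of these increases the size (only the $X$- and $N$-rules can), and each strictly decreases the number of surface nodes, which is at most $|G|$. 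Hence $k,|H|\leq|G|=f_{d+1}(0,|G|)$.

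For the inductive step, fix an $(n+1,d+1)$-box reduction $\sigma:G\recutel^kH$ with witnessing top-level boxes $b_1,\ldots,b_r$, $r\leq n+1$, and single out one of them, say $b_r$. Under superlazy reduction a box is never duplicated without being simultaneously opened, so the residual* containing $b_r$ is opened by exactly one $\recutel_{\mathit{XNDW}}$ step $s^*$. The idea is to reorganise $\sigma$ so that all the work that does \emph{not} depend on the contents of $b_r$ comes first: this prefix $\sigma_1$ opens and fully reduces only $b_1,\ldots,b_{r-1}$, and hence is itself an $(n,d+1)$-box reduction. By the inner induction hypothesis $\sigma_1$ has at most $f_{d+1}(n,|G|)$ steps and reaches a net $G_1$ with $|G_1|\leq f_{d+1}(n,|G|)=:M'$. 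The reduction then performs the single opening step $s^*$ --- this accounts for the summand $1$ --- and finishes with the sub-reduction $\sigma_2$ acting on the material released by $b_r$.

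The quantitative bookkeeping is then immediate. In $G_1$ the residual* of $b_r$ has size at most $M'$ and faces a derelicting tree that is a subgraph of $G_1$, hence with at most $M'$ leaves; so $s^*$ produces at most $M'$ open copies, each of size at most $M'$, and the net obtained right after $s^*$ has size at most $2(M')^2$. Moreover, the contents of $b_r$ lay at depth at most $d+1$, so after the opening they --- together with all boxes they contain --- lie at depth at most $d$. Therefore $\sigma_2$ is a $(2(M')^2,d)$-box reduction, and the outer induction hypothesis bounds its length and the size of its reducts by $f_d(2(M')^2,2(M')^2)$. Summing the three contributions and using the monotonicity of $f$ (the previous Lemma, together with its evident analogue in the second argument) yields $k,|H|\leq f_{d+1}(n,|G|)+1+f_d(2(M')^2,2(M')^2)=f_{d+1}(n+1,|G|)$.

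The \emph{main obstacle} is precisely the reorganisation of $\sigma$ into ``$b_r$-independent work, then the opening $s^*$, then $b_r$-dependent work''. This calls for a postponement/commutation argument: surface steps acting away from the residual* of $b_r$ must be permutable past the steps manipulating it, without altering the final net $H$ nor increasing the step count or the intermediate sizes beyond the stated bounds. Making this precise is subtle because of box \emph{merging} and \emph{nesting} (the very reason the statement is phrased in terms of residuals* of \emph{sets} of boxes): keeping $b_r$ opaque must be shown compatible with $\recutel_M$ steps that could couple it to, or bury it under, other boxes. Once this commutation is established --- and once one has checked that $\sigma_1$ genuinely is an $(n,d+1)$-box reduction while $\sigma_2$ genuinely is a $(2(M')^2,d)$-box reduction, in particular that opening $b_r$ lowers the relevant depths by exactly one --- the size arithmetic and the two appeals to the induction hypotheses are routine.
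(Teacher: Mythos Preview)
Your double induction on $(d,n)$ and the arithmetic bookkeeping are exactly right, and they match the paper's proof. The difference lies in how you split the reduction sequence, and this difference is precisely the ``main obstacle'' you yourself identify.

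You single out an \emph{arbitrary} witness $b_r$ and then propose to \emph{reorganise} $\sigma$ so that all $b_r$-independent work precedes the opening of $b_r$. As you note, this demands a commutation argument that is genuinely delicate in the presence of $\recutel_M$: if the residual* of $b_r$ merges with that of some other witness $b_i$ before being opened, postponing that merge may leave $b_i$ unable to be opened on its own (the derelicting tree it eventually faces might only be available to the merged box), so it is not even clear the reorganised sequence exists with the same length and the same final net. You leave this commutation unproved, so as written the argument has a gap.

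The paper sidesteps the whole issue by a different \emph{choice} of box rather than by reorganisation. It picks the witness $b_t$ whose residual* is opened \emph{last} in the given sequence $\sigma$ (the Remark, item~5, guarantees every witness is eventually opened). With this choice no commutation is needed: the original sequence already decomposes as
\[
G\recutel^i F\recutelExp J\recutel^j H
\]
where the middle step is the opening of $b_t^*$. The prefix is an $(n,d+1)$-box reduction with witnesses $\{b_i:i\neq t\}$, because $b_t$ is still sealed there (surface reduction never touches its contents) and hence no $\recutelExp$ step of the prefix can involve anything in $\B_G(b_t)$. The suffix is a $(|J|,d)$-box reduction because by that point \emph{all} the $b_i$ have been opened, so every box subsequently involved in a $\recutelExp$ step is a residual* of something that lay strictly inside some $b_i$ and is therefore at one lower depth in $J$.

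In short, the fix is not to prove the commutation you outline; it is to pick the last-opened witness, after which the obstacle simply disappears and your remaining arithmetic goes through verbatim.
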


\begin{proof}
We can proceed by induction on $d$:
\begin{varitemize}
\item
  If $d=0$, thanks to the \emph{aspect} (iii) above,
  the reduction $G\recutel^kH$ only involves multiplicative
  reduction steps. They  strictly reduce the size of the underlying
  net. So, $|H|,k\leq |G|=f_0(n,|G|)$.

\item
  Suppose the thesis holds for $d$ and suppose $G\recutel^kH$
  is a $(n,d+1)$-box reduction. We proceed by another induction,
  this time on $n$:
  \begin{varitemize}
    \item
      If $n=0$, then none of the boxes at level $d+1$ can be reduced.
      As a consequence, $G\recutel^kH$
      only involves multiplicative reduction steps and, again,
      $$
      |H|,k\leq m=f_{d+1}(0,m).
      $$
    \item
      Suppose the thesis holds for $n$ and suppose
      $G\recutel^kH$ is a $(n+1,d+1)$-box reduction sequence.
      By definition, there are $r\leq n+1$ boxes
      $b_1\ldots,b_r$ in $G$ satisfying the definition above.
      If $r=0$, the reduction is performed in linear time, so $|H|\leq|G|\leq f_{d+1}(n+1,m)$.
      Otherwise, we already noted that for each box $b_i$ there is a residue* $b_i^*$
      that will be involved in a $\recutelExp$ step.
      There is clearly one index $t$, where $1\leq t\leq r$,
      such that $b^*_t$ is the last box being copied \emph{among the residuals of the $b_i$'s}
      in the sequence $G\recutel^kH$.
      The reduction under consideration can be decomposed as follows:
      $$
      G\recutel^iF\recutelExp J\recutel^jH
      $$
      where the step $F\recutelExp J$ is the reduction of $b_t^*$.
      Up to the $F$, the reduction sequence can be considered
      as a $(n,d+1)$-box reduction sequence: the witness
      list of boxes is $b_1,\ldots,b_{t-1},b_{t+1},\ldots,b_r$.
      After $J$, on the other hand, the reduction sequence
      can be considered as a $(|J|,d)$-box reduction
      sequence.
      Indeed, by definition of $\sigma$, all the boxes that will be involved in
      a $\recutelExp$ step are residual* of some boxes $\{d_1,\ldots,d_k\}$
      inside $b_1,\ldots,b_r$; but all the boxes $b_i$ have been opened, so
      the residual of their content is less depth.\par
      Applying both inductive hypothesis, we get
      \begin{eqnarray*}
        i,|F|&\leq& f_{d+1}\left(n,m\right)\\
        |J|&\leq& 2|F|^2\qquad\mbox{usual reduction bound}\\
        j,|H|&\leq& f_{d}\left(|J|,|J|\right)
      \end{eqnarray*}
      As a consequence:
      \begin{eqnarray*}
				|J|  &\leq & 2f_{d+1}(n,m)^2\\
        k       &=    & i+1+r\leq f_{d+1}(n,m)+1+f_{d}(|J|,|J|)\\
                &\leq & f_{d+1}(n,m)+1+f_d\left(2f_{d+1}(n,m)^2,2f_{d+1}(n,m)^2\right)\\
        |H|&\leq & f_d(|J|,|J|)< f_{d+1}(n,m)+1+f_d(|J|,|J|)\\
                &\leq & f_{d+1}(n,m)+1+f_d\left(2f_{d+1}(n,m)^2,2f_{d+1}(n,m)^2\right)
      \end{eqnarray*}
  \end{varitemize}
\end{varitemize}
This concludes the proof.
\end{proof}

\begin{corollary}\label{cor:mainbound}
	For every $n\in\N$ there is a primitive recursive function
	$g_n:\N\rightarrow\N$ such that for every proof net $G$,
	$\dlength{G}{\recutel},\dsize{G}{\recutel}\leq g_{\partial(G)}(|G|)$.
\end{corollary}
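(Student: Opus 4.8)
The plan is to read the statement off directly from Proposition~\ref{prop:mainbound}, using aspect (iv) of the preceding Remark. First I would define the family $\{g_n\}_{n\in\N}$ by diagonalizing the functions $f_d$, namely $g_n(m)=f_n(m,m)$. Since every $f_d$ is primitive recursive (as observed immediately after its recursive definition) and the diagonal map $m\mapsto(m,m)$ is primitive recursive, each $g_n$ is primitive recursive by composition.

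Next I would fix an arbitrary proof net $G$ together with an arbitrary reduction sequence $G\recutel^kH$. By aspect (iv) of the Remark, every reduction issuing from $G$ is a $\bigl(|G|,\partial(G)\bigr)$-box reduction, because $|G|$ bounds the number of boxes and $\partial(G)$ bounds their depth. Hence Proposition~\ref{prop:mainbound} applies with $n=|G|$ and $d=\partial(G)$, yielding
\[
  k,\ |H|\ \leq\ f_{\partial(G)}\bigl(|G|,|G|\bigr)\ =\ g_{\partial(G)}(|G|).
\]

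Finally, the crucial observation is that this bound is \emph{uniform}: it depends on $G$ only through $|G|$ and $\partial(G)$, not on the particular $k$ or $H$. Consequently every reduction sequence starting from $G$ has length at most $g_{\partial(G)}(|G|)$ and every reduct has size at most $g_{\partial(G)}(|G|)$; taking suprema over all such sequences gives $\dlength{G}{\recutel}\leq g_{\partial(G)}(|G|)$ and $\dsize{G}{\recutel}\leq g_{\partial(G)}(|G|)$. As a byproduct, the uniform bound on lengths shows that $\recutel$ is strongly normalizing on $G$, so the two maxima defining $\dlength{G}{\recutel}$ and $\dsize{G}{\recutel}$ are genuinely finite and attained, which is exactly what makes the statement well-posed.

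There is no genuinely difficult step: the corollary is essentially the remark that the bound of Proposition~\ref{prop:mainbound} factors through the pair $\bigl(|G|,\partial(G)\bigr)$. The only two points that deserve care are (a) diagonalizing the two-argument $f_d$ into a one-argument $g_n$ of the size $|G|$, so that the index is exactly the depth $\partial(G)$; and (b) noting that the finiteness of the maxima in $\dlength{G}{\recutel}$ and $\dsize{G}{\recutel}$ is a consequence of the uniformity of the Proposition's bound, and so requires no separate termination argument.
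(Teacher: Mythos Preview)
Your proof is correct and follows exactly the paper's approach: invoke aspect (iv) of the Remark to see that any reduction from $G$ is a $(|G|,\partial(G))$-box reduction, apply Proposition~\ref{prop:mainbound}, and set $g_d(n)=f_d(n,n)$. The paper's proof is terser but identical in substance.
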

\begin{proof}
Remember that every reduction sequence
from $\Pi$ is a $(|\Pi|,\partial(\Pi))$-reduction sequence.
So, we can use Proposition~\ref{prop:mainbound} and choose $g_d(n)=f_d(n,n)$.
\end{proof}

} %\condinc--------------------------------------------------------------------------------------
{ %dimostrazione con lemma di Standardizzazione e \hookrightarrow

We prove the result starting with a restriction $\hookrightarrow$ on the relation $\recutel$.
$\hookrightarrow$ is simply the union of $\recutel_{\linear}$, $\recutel_{\otimes}$, $\recutel_{M}$
and $\recutel_{\mathit{XNDW}}$, where any $\recutel_{M}$ can only be applied to nets which only
contain $\recutel_{M}$ redexes. In other words, $\recutel_{M}$ is postponed as much as possible.
This makes our arguments simpler without loss of generality, as we now show.
$\hookrightarrow^k$ will denote a reduction with $k$ steps of $\hookrightarrow$.
Given a proof $G$ and any reduction relation $\rightarrow$,
$\dlength{G}{\rightarrow}$ and $\dsize{G}{\rightarrow}$ denote the maximum
length of a reduction sequence starting in $G$ (under $\rightarrow$)
and the maximum size of any reduct of $G$ (under $\rightarrow$), respectively.
We note $|G|$ the size of a proof net $G$.

\begin{lemma}\label{lemma:standard}
	For every proof $G$,
	$\dlength{G}{\recutel}=\dlength{G}{\hookrightarrow}$ and
	$\dsize{G}{\recutel}=\dsize{G}{\hookrightarrow}$.
\end{lemma}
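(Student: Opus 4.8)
The plan is to prove each equality as two inequalities. Since $\hookrightarrow$ is by definition a restriction of $\recutel$, the inclusion $\hookrightarrow\,\subseteq\,\recutel$ gives $\dlength{G}{\hookrightarrow}\leq\dlength{G}{\recutel}$ and $\dsize{G}{\hookrightarrow}\leq\dsize{G}{\recutel}$ at once. The real content is the reverse inequality, which I would obtain by a \emph{standardization by postponement} of the modal rule $\recutel_M$. Concretely, it suffices to show that every finite $\recutel$-sequence $\sigma$ starting in $G$ can be rearranged into a $\hookrightarrow$-sequence $\sigma'$ starting in $G$ with $|\sigma'|\geq|\sigma|$ and whose largest reduct is at least as large as the largest reduct of $\sigma$. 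Taking the supremum over all $\sigma$ (and using that $\sigma'$ is itself a $\recutel$-sequence) then forces $\dlength{G}{\recutel}\leq\dlength{G}{\hookrightarrow}$ and $\dsize{G}{\recutel}\leq\dsize{G}{\hookrightarrow}$, hence equality.

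The engine of the rearrangement is a one-step \emph{local postponement} lemma: if $G\recutel_M H\recutel_R K$ with $R\in\{\linear,\otimes,\mathit{XNDW}\}$, then there is $H'$ with $G\recutel_R H'\recutel_M K$. I would prove it by a case analysis on the relative positions of the $M$-redex and the $R$-redex (disjoint, nested, overlapping). The decisive structural fact is that $\recutel_M$ is a surface step that merely relocates one box through the auxiliary door of another, pushing its content one level down (Figure~\ref{fig:exporules}); it never enters a box and, crucially, it neither duplicates nor erases any surface node. Hence the $R$-redex fired after the $M$-step is already present before it, and the $M$-redex is not consumed by firing $R$ first. In particular the swap trades one $M$-step and one $R$-step for one $R$-step and one $M$-step, so it preserves the total number of steps.

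For sizes I would invoke the paper's own observation that only $\longrightarrow_X$ and $\longrightarrow_N$ can enlarge a net, so that $\recutel_M$ is size-non-increasing, i.e. $|H|\leq|G|$ in the situation above. A single swap leaves the endpoints $G$ and $K$ of the swapped segment untouched and replaces the intermediate $H$ by $H'$; since $|H|\leq|G|$, while $|H'|\geq\max(|G|,|K|)$ whenever $R$ enlarges the net, the maximum reduct size along the whole sequence never decreases. Termination of the rearrangement is controlled by the inversion measure $\mu(\sigma)$, defined as the number of pairs of steps $(i,j)$ with $i<j$ such that step $i$ is an $M$-step and step $j$ is not; each application of the local postponement lemma decreases $\mu$ by exactly one. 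When $\mu(\sigma')=0$ the sequence has the shape of a block of non-$M$ steps followed by a block of $M$-steps only.

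It remains to see that this final block of $M$-steps is legal for $\hookrightarrow$, i.e. that every net at which an $M$-step is taken has no non-$M$ redex. This follows from a \emph{persistence} property, itself a by-product of the case analysis above: $\recutel_M$ neither creates nor destroys $\linear$-, $\otimes$- or $\mathit{XNDW}$-redexes. Thus if the net opening the $M$-block carried a non-$M$ redex, that redex would survive, unfired, to the very end of $\sigma'$; firing it there lengthens $\sigma'$ without decreasing any size, and repeating absorbs every pending non-$M$ redex into the initial block, after which the trailing $M$-block is genuinely $\hookrightarrow$-legal. I expect the main obstacle to be precisely the case analysis underlying the local postponement and persistence properties: for each pairing of $\recutel_M$ with $\recutel_{\linear}$, $\recutel_{\otimes}$ and $\recutel_{\mathit{XNDW}}$ one must check that the two redexes commute as one-step rewrites and that neither rule disturbs the redex-occurrences of the other, the overlapping cases, where the box relocated by $M$ is adjacent to the nodes touched by $R$, being the ones that require genuine attention.
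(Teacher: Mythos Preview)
Your overall plan---postpone $\recutel_M$ to the end---is exactly the paper's approach, but your local postponement lemma is stated too strongly and is in fact false. You claim that $G\recutel_M H\recutel_R K$ always rewrites to $G\recutel_R H'\recutel_M K$, a clean one-for-one swap. The paper's own example shows this fails in the overlapping case you flag as ``the one requiring genuine attention'': if the $\recutel_M$ step merges two boxes into a single box $b$, and the subsequent $\recutel_{\mathit{XNDW}}$ step opens precisely that merged box $b$, then firing the $\mathit{XNDW}$-redex first opens only the \emph{outer} of the two original boxes, after which the inner box faces a copy of the derelicting tree and must itself be opened by a second $\recutel_{\mathit{XNDW}}$. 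The rearranged sequence is $G\recutel_{\mathit{XNDW}} J\recutel_{\mathit{XNDW}} K$, with no $M$-step at all. Your justification ``the $M$-redex is not consumed by firing $R$ first'' is exactly what breaks: opening the outer box destroys it, so the $M$-redex is gone.

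The fix is to weaken the local lemma to what the paper actually uses: whenever $G\recutel_M F\recutel_x H$ with $x\neq M$, there is a sequence $G\recutel_{x_1}F_1\recutel_{x_2}\cdots\recutel_{x_{n+1}}H$ with $n\geq 1$ in which all $M$-steps (if any remain) occur after all non-$M$ steps. This still pushes $M$ rightward and never shortens the sequence, so your inversion-measure argument and your size argument survive essentially unchanged; you just lose the exact step-count preservation, which you did not need anyway.
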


\begin{proof}
Whenever $G\recutel_{M}F\recutel_xH$ and $x\neq M$,
there are $F_1,\ldots,F_n$
(where $n\geq 1$) such that
$
G\recutel_{x_1}F_1\recutel_{x_2}\cdots\recutel_{x_n}F_n\recutel_{x_{n+1}}H,
$
and $x_{i+1}=M$ whenever $x_i=M$. For example,
if $G\recutel_{M}F\recutel_{XNDW}H$
and the box copied in the second step is exactly the one created by
the first step, then clearly
$$
G\recutel_{XNDW}J\recutel_{XNDW}H.
$$
As a consequence, for any sequence $G_1\recutel\cdots\recutel G_n$ there
is another sequence $F_1\hookrightarrow\cdots\hookrightarrow F_m$ such
that $G_1=F_1$, $G_n=F_m$ and $m\geq n$. This proves the first claim.
Now, observe that for any $1\leq i\leq n$ there is $j$ such
that $|F_j|\geq |G_i|$: a simple case analysis suffices.
This concludes the proof.
\end{proof}

The following definition is the main ingredient since it allows to structure the proof of soundness inductively.
A reduction sequence
$\sigma\equiv G\recutel_{x_1}F\recutel_{x_2}\ldots\recutel_{x_k}H$ is said to be a \emph{$(n,d)$-box reduction}
when $x_i\neq M$ for every $1\leq i\leq k$ and
there are $r\leq n$ boxes $b_1,\ldots,b_r$ between those at level $0$ in $G$ such that
the following conditions hold
(let $J_1,\ldots,J_r$ be the proof nets inside $b_1,\ldots,b_r$, respectively):
\begin{varitemize}
\item
  $\partial(J_1),\ldots,\partial(J_r)< d$.
  %The nodes of the proof nets inside every box $b_1,\ldots,b_r$ are at most at depth $d$.
\item
  The box involved in any step $\recutel_{\mathit{XNDW}}$ of $\sigma$ is either a residual of
  $b_1,\ldots,b_r$ or 
  %any residual of the proof net inside a box among $b_1,\ldots,b_r$.
  a residual of a box appearing in one of $J_1,\ldots,J_r$.
\end{varitemize}
\begin{remark}\label{rmrk:ndredution}
A few observations about the above definition:
\begin{varnumlist}
\item
  $G$ may contain more than $n$ boxes at level $0$, or it may contain boxes whose depth is greater than $d$;
\item
  $H$ is \emph{not} necessarily a normal form. It may contain boxes at level $d$, or higher;
\item
  By definition, any node inside the boxes $b_1,\ldots,b_r$ \emph{must have} depth at least $1$. This means that if $\sigma$ is a \emph{$(n,0)$-box reduction} it is, in fact, talking about boxes that contain proof nets with negative depth. They cannot exist,
so $\sigma$ only uses the linear rewriting steps $\longrightarrow_\linear, \longrightarrow_\otimes$;
\item
  By definition, \emph{every} reduction starting at any net $G$ is a $\left(|G|,\partial(G)\right)$-box reduction because $|G|$ necessarily bounds the number $n$ of boxes, and $\partial(G)$ bounds the value $d$;
\item
  The definition is a good one \emph{only} because we are assuming to
work without the rule $\longrightarrow_M$. Otherwise (residuals of) boxes coming from
$b_1,\ldots,b_r$ could be ``merged'' with (residuals of) boxes in $G$ but not in
the list.
\end{varnumlist}
\end{remark}

Now we define a family of functions $\{ f_d:\N\times\N\rightarrow \N\}_{d\in\N}$ such that every $f_d(n,m)$ bounds
\emph{both} the reduction cost \emph{and} the size of
the reducts, of every net $G$ when performing a $(n,d)$-box reduction on it:
\begin{eqnarray*}
   f_0(n,m)&=&m \\
   f_{d+1}(0,m)&=&m\\
   f_{d+1}(n+1,m)&=&1+f_{d+1}(n,m)+f_{d}((2m+1)f_{d+1}(n,m),(2m+1)f_{d+1}(n,m))
\end{eqnarray*}
By definition, all the functions $f_d$ are primitive recursive.
\begin{lemma}
For every $d,n,m\in\N$, $f_{d+1}(n,m)\geq f_d(n,m)$ and $f_d(n+1,m)\geq f_d(n,m)$.
\end{lemma}
The proof is by induction on $n$. We can prove the following:
\begin{proposition}\label{prop:mainbound}
Let $G$ be any proof net and let $G\hookrightarrow^kH$ be a $(n,d)$-box reduction
sequence with $k$ steps. Then
$k,|H|\leq f_{d}(n,|G|)$.
\end{proposition}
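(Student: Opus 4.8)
The plan is a nested induction that mirrors the recursion defining $f_d$: an outer induction on the depth $d$ and, within it, an inner induction on the box–count $n$. The two base cases are immediate from Remark~\ref{rmrk:ndredution}(iii): a $(n,0)$-box reduction would require witness boxes whose contents have negative depth, which cannot exist, so it uses only the linear steps $\recutel_\linear,\recutel_\otimes$, which strictly shrink the net; hence $k,|H|\le|G|=f_0(n,|G|)$. The same observation disposes of a $(0,d+1)$-box reduction (empty witness list, hence only linear steps), giving $k,|H|\le|G|=f_{d+1}(0,|G|)$. Throughout I write $m=|G|$.

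For the inductive step, let $\sigma\equiv G\hookrightarrow^kH$ be a $(n+1,d+1)$-box reduction with witness boxes $b_1,\dots,b_r$, $r\le n+1$ (the subcase $r=0$ is linear and trivial). The decisive structural fact is that under superlazy reduction a box is never duplicated without being simultaneously opened, so each $b_i$ is consumed by \emph{at most one} $\recutel_{\mathit{XNDW}}$ step; moreover a box nested inside one of the nets $J_1,\dots,J_r$ can reach the surface, and thus be opened by a surface step, only after its enclosing $b_i$ has been opened. Consequently, if $\sigma$ performs any box opening at all then some step opens a residual of one of the $b_i$, and I split $\sigma$ at the \emph{last} such step, opening the unique residual of $b_t$:
$$
G\hookrightarrow^iF\recutel_{\mathit{XNDW}}J\hookrightarrow^jH.
$$
Neither a residual of $b_t$ nor a box inside $J_t$ is opened before $F$, so $G\hookrightarrow^iF$ is a $(n,d+1)$-box reduction witnessed by $b_1,\dots,b_{t-1},b_{t+1},\dots,b_r$, and the inner hypothesis gives $i,|F|\le f_{d+1}(n,m)$. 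After the split every remaining $\recutel_{\mathit{XNDW}}$ step opens a box that used to sit inside some $J_i$; since $\partial(J_i)<d+1$, such a box has content of depth $<d$, so $J\hookrightarrow^jH$ is a $(|J|,d)$-box reduction and the outer hypothesis gives $j,|H|\le f_d(|J|,|J|)$. It is exactly in controlling which boxes the tail steps may open that postponing $\recutel_M$, as in Remark~\ref{rmrk:ndredution}(v), is indispensable.

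It remains to bound the single opening step $F\recutel_{\mathit{XNDW}}J$. The box $b_t$ is replicated once per $D$-leaf of the derelicting tree it faces, and the crux is that this number of copies is bounded by $2m+1$, a quantity depending only on the original net and not on the possibly much larger $F$; since each copy contributes size at most $|F|$, we obtain $|J|\le(2m+1)|F|\le(2m+1)f_{d+1}(n,m)$. Combining the three estimates and using the monotonicity of $f_d$ in both arguments (monotonicity in the first argument is the preceding lemma, and in the second it follows by a straightforward induction) yields
$$
k=i+1+j\le f_{d+1}(n,m)+1+f_d\bigl((2m+1)f_{d+1}(n,m),(2m+1)f_{d+1}(n,m)\bigr)=f_{d+1}(n+1,m),
$$
and the identical computation bounds $|H|$, closing both inductions.

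I expect the real work to be the single-step size estimate just used, namely that the number of open copies produced when a box meets a derelicting tree stays bounded by $2m+1$ uniformly along the reduction rather than merely by $|F|$. This is where the cost conditions in the definition of derelicting tree must be exploited: they force every $D$-leaf to be balanced by surrounding contraction and digging structure, so that the number of $D$-leaves, equivalently the number of content copies, is controlled by the supply of such nodes inherited from $G$. A naive count would only give copies $\le|F|$ and hence the weaker quadratic step bound $|J|\le 2|F|^2$; securing the linear factor $(2m+1)$ is precisely what lets the recursion for $f_d$ remain primitive recursive. A secondary, bookkeeping-level obstacle is to check that both residual conditions defining a $(n,d)$-box reduction genuinely survive the split — in particular that after $F$ no witness residual can still trigger a level-$(d+1)$ opening — which again relies essentially on working with $\hookrightarrow$ rather than $\recutel$.
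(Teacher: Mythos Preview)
Your nested induction and the decomposition at the last opening of a witness box are exactly the paper's approach, and your handling of the prefix as an $(n,d{+}1)$-box reduction and the suffix as a $(|J|,d)$-box reduction is correct.

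The gap is in the single-step estimate $|J|\le(2m+1)|F|$. You claim the crux is that the number of $D$-leaves of the derelicting tree facing $b_t$ is bounded by $2m{+}1$, and you flag this as ``the real work'' to be extracted from the cost conditions on derelicting trees. This is not how the bound arises, and the claim itself is not obviously true: earlier $\recutel_{\mathit{XNDW}}$ steps replicate the tree structure once per premise of the opened box, so the tree facing $b_t$ in $F$ can have many more $D$-leaves than anything present in $G$. The paper's argument swaps the two factors. Because reduction is at the surface and $\recutelM$ is postponed, the box $b_t$ sits in $F$ \emph{unchanged} from $G$, so $|b_t|\le m$. Hence one bounds the number of copies by $|F|$ (naively) but the size of each copy by $m$; together with at most $m$ premise-indexed copies of the tree (each of size $\le|F|$) and the untouched remainder of $F$, this gives $|J|\le m\cdot|F|+m\cdot|F|+|F|=(2m+1)|F|$. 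No appeal to the cost conditions is needed.

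A minor point: your remark that ``securing the linear factor $(2m+1)$ is precisely what lets the recursion for $f_d$ remain primitive recursive'' is also off. The weaker bound $|J|\le 2|F|^2$ would still yield a primitive recursive $f_d$ (the paper's alternative version does exactly this); the linear factor only sharpens the estimate.
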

\begin{proof}
We write $m=|G|$. We can proceed by induction on $d$:
\begin{varitemize}
\item
  If $d=0$, thanks to Remark~\ref{rmrk:ndredution}.3 above,
  the reduction $G\hookrightarrow^kH$ only involves multiplicative
  reduction steps. They  strictly reduce the size of the underlying
  net. So, $|H|,k\leq m=f_0(n,m)$.

\item
  Suppose the thesis holds for $d$ and suppose $G\hookrightarrow^kH$
  is a $(n,d+1)$-box reduction. We proceed by another induction,
  this time on $n$:
  \begin{varitemize}
    \item
      If $n=0$, then none of the boxes at level $d+1$ can be reduced.
      As a consequence, $G\hookrightarrow^kH$
      only involves multiplicative reduction steps and, again,
      $$
      |H|,k\leq m=f_{d+1}(0,m).
      $$
    \item
      Suppose the thesis holds for $n$ and suppose
      $G\hookrightarrow^kH$ is a $n+1$-box reduction sequence at
      level $d+1$. By definition, there are $r\leq n+1$ boxes
      $b_1\ldots,b_r$ in $G$ satisfying the definition above.
      There is clearly one index $t$, where $1\leq t\leq r$
      such that $b_t$ is the last box being copied (among $b_1,\ldots,b_r$)
      in the sequence $G\hookrightarrow^kH$. Up to the point where
      $b_t$ is copied, the reduction sequence can be considered
      as a $(n,d+1)$-box reduction sequence: the witness
      list of box is exactly $b_1,\ldots,b_{t-1},b_{t+1},\ldots,b_r$.
      After $b_t$ is copied, on the other hand, the reduction sequence
      under consideration can be considered as a $\left(|J|,d\right)$-box reduction sequence,
      since all the boxes that will be copied are
      part of the residual of the content of boxes in the list $b_1,\ldots,b_r$.
      The reduction under consideration
      can be decomposed as follows
      $$
      G\hookrightarrow^iF\recutel_{XNDW}J\hookrightarrow^jH
      $$
      where $G\hookrightarrow^iF$ is an $(n,d+1)$-box reduction
      sequence, the step
      $F\recutel_{XNDW}J$ involves \emph{exactly} $b_t$
      and $J\hookrightarrow^jH$ is a $\left(|J|,d\right)$-box reduction.
      Now, observe that $|b_t|\leq m$. In $|J|$ we will find
      at most $|F|$ copies of $b_t$ and at most $|b_t|$ copies of the
      underlying derelicting tree (which can itself contain at most
      $|F|$ nodes). Applying both inductive hypothesis, we get
      \begin{align*}
        i,|F|&\leq f_{d+1}(n,m)\\
        |J|&\leq 2m f_{d+1}(n,m)+|F|\\
        j,|H|&\leq f_{d}(|J|,|J|)
      \end{align*}
      As a consequence:
      \begin{align*}
        k&=i+1+j\leq f_{d+1}(n,m)+1+f_{d}(|J|,|J|)\\
        &\leq f_{d+1}(n,m)+1+f_d((2m+1)f_{d+1}(n,m),(2m+1)f_{d+1}(n,m))\\
        |H|&\leq f_d(|J|,|J|)\leq f_{d+1}(n,m)+1+f_d(|J|,|J|)\\
        &\leq f_{d+1}(n,m)+1+f_d((2m+1)f_{d+1}(n,m),(2m+1)f_{d+1}(n,m))\\
      \end{align*}
  \end{varitemize}
\end{varitemize}
This concludes the proof.
\end{proof}
\begin{corollary}[Soundness]\label{cor:mainbound}
For every $n\in\N$ there is a primitive recursive function
	$g_n:\N\rightarrow\N$ such that for every proof net $G$,
	$\dlength{G}{\recutel},\dsize{G}{\recutel}\leq g_{\partial(G)}(|G|)$.
\end{corollary}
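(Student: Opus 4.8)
The plan is to obtain the corollary as an essentially immediate consequence of Proposition~\ref{prop:mainbound}, once two preliminary reductions are made. First I would pass from $\recutel$ to its postponed variant $\hookrightarrow$: by Lemma~\ref{lemma:standard} we have $\dlength{G}{\recutel}=\dlength{G}{\hookrightarrow}$ and $\dsize{G}{\recutel}=\dsize{G}{\hookrightarrow}$, so it is enough to bound the maximal length and the maximal reduct size of $\hookrightarrow$-sequences issuing from $G$. Second, I would invoke Remark~\ref{rmrk:ndredution}.4: every reduction sequence starting at $G$ is already a $(|G|,\partial(G))$-box reduction, simply because $|G|$ bounds the number of level-$0$ boxes and $\partial(G)$ bounds their depth.

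With these in place, the core step is a single application of Proposition~\ref{prop:mainbound} with the parameters $n=|G|$ and $d=\partial(G)$. For any sequence $G\hookrightarrow^{k}H$ the proposition yields $k,\,|H|\leq f_{\partial(G)}(|G|,|G|)$, and since this bound does not depend on the particular sequence chosen, it simultaneously bounds $\dlength{G}{\hookrightarrow}$ and $\dsize{G}{\hookrightarrow}$. Combining with the first step gives $\dlength{G}{\recutel},\dsize{G}{\recutel}\leq f_{\partial(G)}(|G|,|G|)$.

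It then remains only to package the right-hand side as a primitive recursive function of $|G|$ indexed by $\partial(G)$. I would set $g_d(m)=f_d(m,m)$ for each $d\in\N$. As observed right after the definition of the family $\{f_d\}$, every $f_d$ is primitive recursive; composing with the (primitive recursive) diagonal map $m\mapsto(m,m)$ preserves this, so each $g_d$ is primitive recursive, and by construction $\dlength{G}{\recutel},\dsize{G}{\recutel}\leq g_{\partial(G)}(|G|)$, which is exactly the claim.

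I do not expect a genuine obstacle at the level of the corollary itself: all the combinatorial content is carried by the double induction (on $d$, then on $n$) inside Proposition~\ref{prop:mainbound}. The only point deserving care is the very first reduction step, namely the legitimacy of working with $\hookrightarrow$ in place of $\recutel$; this is precisely what Lemma~\ref{lemma:standard} secures, and it is essential because, as Remark~\ref{rmrk:ndredution}.5 warns, in the presence of unrestricted $\recutel_M$ steps the $(n,d)$-box bookkeeping would break down, since residuals of the chosen boxes $b_1,\ldots,b_r$ could be merged with residuals of boxes lying outside the witness list.
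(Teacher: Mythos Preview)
Your proposal follows the paper's own proof essentially step for step: reduce to $\hookrightarrow$ via Lemma~\ref{lemma:standard}, recognise the reduction as a $(|G|,\partial(G))$-box reduction, apply Proposition~\ref{prop:mainbound}, and set $g_d(m)=f_d(m,m)$.

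There is one small point you glide over. The definition of an $(n,d)$-box reduction explicitly requires $x_i\neq M$ for every step, so an arbitrary $\hookrightarrow$-sequence is \emph{not} literally an $(n,d)$-box reduction: because $\hookrightarrow$ postpones $\recutel_M$, such a sequence decomposes as an $(n,d)$-box reduction prefix followed by a tail of pure $M$-steps. The paper makes this explicit, observing that the $M$-tail has at most linearly many steps and strictly shrinks the net, so it is absorbed into the bound. Remark~\ref{rmrk:ndredution}.4, which you quote, is phrased a bit loosely on this point; your argument is fine once you add one sentence accounting for the trailing $M$-steps.
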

\begin{proof}
By Lemma~\ref{lemma:standard}, we can bound
$\dlength{G}{\hookrightarrow}$ and $\dsize{G}{\hookrightarrow}$
rather than $\dlength{G}{\recutel}$ and $\dsize{G}{\recutel}$.
Now, observe that any reduction $G\hookrightarrow^kH$ is a $\left(|G|,\partial(G)\right)$-box
reduction, followed by at most
a linear number of $M$-reduction steps, which anyway shrinks the size of
the underlying proof net.
The thesis follows from Proposition~\ref{prop:mainbound}.
\end{proof}

} %condinc

\section{Completeness}
\label{section:Completeness}
%%%%%%%%%%%%%%%%%%%%%%%%%%%%%
The goal is to show the existence of an embedding of any primitive
recursive function $\funone$ into a pure proof net $G_\funone$ such that
$G_f$ simulates $\funone$ via superlazy reduction. The existence of such an embedding is what
we mean by \emph{completeness}.
\par\medskip
\condinc{
\subsection{Recalling the Primitive Recursive Functions}
\label{subsection:Recalling the Primitive Recursive Functions}
The primitive recursive functions (\PR) are functions from tuples of natural numbers to
a natural number. \PR\ is the least set that both contains the \textit{basic} primitive
recursive functions and which is closed under a finite number of applications of
\textit{composition} and \textit{primitive recursion}.

\paragraph{Basic Primitive Functions.}
The basic primitive functions are the 0-ary \textit{constant} function $0$,
the $1$-ary \textit{successor} function $s:\N\rightarrow\N$, and the
$m$-ary \textit{projection} function $\pi_i^m:\N^m\rightarrow\N$,
for every $m\geq 1$ and $1\leq i\leq m$. For every natural number
$n$, the natural number $s(n)$ is simply $n+1$, while
$\pi_i^m(n_1,\ldots,n_m)$ is the $i$-th argument $n_i$,
for every $n_1,\ldots,n_m\in\N$.

\paragraph{Composition.}
Let $\funone$ be an $n$-ary
\PR\ function and let $\funtwo_1,\ldots,\funtwo_n$ be $m$-ary \PR\ functions.
The \textit{composition} of $\funone$ with $\funtwo_1,\ldots, \funtwo_n$
is the $m$-ary function defined as follows:
$$
\prcom{\funone}{\funtwo_1,\ldots,\funtwo_n}(n_1,\ldots,n_m)=
\funone(\funtwo_1(n_1,\ldots,n_m),\ldots,\funtwo_n(n_1,\ldots,n_m)).
$$

\paragraph{Primitive Recursion.}
Let $\funone$ be an $m$-ary \PR\ function and let $\funtwo$ be an ($m+2$)-ary \PR\ function.
The function defined by \textit{primitive recursion} on $\funone$ and $\funtwo$ is
the ($m+1$)-ary function defined as follows:
\begin{equation} \label{eqn:prec}
	\begin{split}
		\prrec{\funone}{\funtwo}(0,n_1,\ldots, n_m)&=\funone(n_1,\ldots,n_m) \\
		\prrec{\funone}{\funtwo}(n+1,n_1,\ldots,n_m)&=
		\funtwo(n,\prrec{\funone}{\funtwo}(n,n_1,\ldots,n_m),n_1,\ldots,n_m)
	\end{split}
\end{equation}

\subsection{Representation of Integer Functions}
We represent (tuples) of natural numbers and functions from naturals to natural as proof nets.\par

\paragraph{Clusters.}
Figure~\ref{fig:Clusters} introduces clusters of, at least, one instance of the nodes $L_\linear, L_\otimes, R_\otimes$.
\par

\begin{figure}[!ht]
\begin{center}
\subfigure[]{
  %%%%% Cluster of L_-o
  \begin{minipage}[c]{110pt}
    \centering\scalebox{0.6}{\epsfbox{figure.102}}
  \end{minipage}
  \begin{minipage}[c]{10pt}
    \centering $\equiv$
  \end{minipage}
  \begin{minipage}[c]{70pt}
    \centering\scalebox{0.6}{\epsfbox{figure.103}}
  \end{minipage}
\label{subf:cluster-L-lin}
}

\begin{tabular}{rl}
\subfigure[]{
  %%%%% Cluster of L_\otimes
  \begin{minipage}[c]{110pt}
    \centering\scalebox{0.6}{\epsfbox{figure.106}}
  \end{minipage}
  \begin{minipage}[c]{10pt}
    \centering $\equiv$
  \end{minipage}
  \begin{minipage}[c]{70pt}
    \centering\scalebox{0.6}{\epsfbox{figure.107}}
  \end{minipage}
\label{subf:cluster-L-oti}
}
&
\subfigure[]{
  %%%%% Cluster of R_\otimes
  \begin{minipage}[c]{110pt}
    \centering\scalebox{0.6}{\epsfbox{figure.108}}
  \end{minipage}
  \begin{minipage}[c]{10pt}
    \centering $\equiv$
  \end{minipage}
  \begin{minipage}[c]{70pt}
    \centering\scalebox{0.6}{\epsfbox{figure.109}}
  \end{minipage}
\label{subf:cluster-R-lin}
}
\end{tabular}

\caption{Clusters of nodes}
\label{fig:Clusters}
\end{center}
\end{figure}

\paragraph{Numerals.}
Figure~\ref{fig:ChurchNumeralAsPN} introduces the closed proof nets $\overline{0}$ and
$\overline{n}$, with $n\geq 1$. $nX$ is the $n$-bounded spine in Figure~\ref{fig:Derelicting tree-b}.
A basic observation to prove the completeness is that every $\overline{n}$ contains a
$n$-bounded spine.

\begin{figure}[ht]
\begin{center}
  %%%%% Church numeral n=0
  \begin{minipage}[c]{110pt}
    \centering\scalebox{0.6}{\epsfbox{figure.1041}}
  \end{minipage}
  \begin{minipage}[c]{10pt}
    \centering $\equiv$
  \end{minipage}
  \begin{minipage}[c]{60pt}
    \centering\scalebox{0.6}{\epsfbox{figure.1051}}
  \end{minipage}
\hspace{30pt}
  %%%%% Church numeral n>0
  \begin{minipage}[c]{110pt}
    \centering\scalebox{0.6}{\epsfbox{figure.104}}
  \end{minipage}
  \begin{minipage}[c]{10pt}
    \centering $\equiv$
  \end{minipage}
  \begin{minipage}[c]{60pt}
    \centering\scalebox{0.6}{\epsfbox{figure.105}}
  \end{minipage}
\caption{Church numerals as proof nets}
\label{fig:ChurchNumeralAsPN}
\end{center}
\end{figure}

\paragraph{Tuples of numerals.}
The proof nets $\tuple{\overline{n_1},\ldots,\overline{n_m}}$ are obtained as follows.
If $m=1$, then $\tuple{\overline{n_1}}$ coincides to the net $\overline{n_1}$.
Otherwise, if $m>1$, then $\tuple{\overline{n_1},\ldots,\overline{n_m}}$ is obtained by composing the $m$ conclusions of every proof net $\overline{n_1},\ldots,\overline{n_m}$ with one of the premises of $mR_\otimes$.

\paragraph{Integer functions.}
Let $G$ be a proof net with a single premise and a single conclusion.
We write $G\smile\tuple{\overline{n_1},\ldots,\overline{n_m}}$ to mean the closed proof net with a single conclusion obtained by plugging the conclusion of the net $\tuple{\overline{n_1},\ldots,\overline{n_m}}$ into the assumption of $G$.\par
For every function $\funone:\N^m\rightarrow\N$, with arity $m\geq 0$, we shall say that a proof net $\prtopn{\funone}$ with one premise \emph{represents} $\funone$ iff
$\prtopn{\funone}\smile\tuple{\overline{n_1},\ldots,\overline{n_m}}$ reduces
to $\overline{\funone(n_1,\ldots,n_m)}$,
for every $n_1,\ldots,n_m\in\N$.

\subsection{The Completeness theorem}

Now we can state formally the Completeness theorem:

\begin{theorem}[Completeness]
\label{theorem:Completeness}
Every $\funone$ in \PR\ is represented by a proof net $\prtopn{\funone}$.
\end{theorem}
To prove it we define how to map every $\funone\in\PR$ into a corresponding $G_\funone$. Then we show that superlazy reduction
allows to rewrite any $\prtopn{\funone}\smile\tuple{\overline{n_1},\ldots,\overline{n_m}}$
to $\overline{\funone(n_1,\ldots,n_m)}$.
To map a function into a net requires to program with the proof nets as we do here below.

\paragraph{The Successor.}
Figure~\ref{fig:Successor} defines the proof net $G_\Succ$ that represents the successor function.
$\prtopn{\Succ}\smile\tuple{\overline{n}}$ yields $\overline{n+1}$ because $\prtopn{\Succ}$ extends the bounded spine $nX$ and $nL_{\multimap}$ of $\overline{n}$ into $(n+1)X$ and $(n+1)L_{\multimap}$, the core of $\overline{n+1}$.
% \condinc{The net corresponds to $\lambda fx.\unoc f(n f x)$, where $n$ is the free variable representing the premise of the net.}
% {} %only in the full version

\begin{figure}[ht]
\begin{center}
%%%%% Successor
  \begin{minipage}[c]{130pt}
    \centering\scalebox{0.6}{\epsfbox{figure.122}}
  \end{minipage}
  \begin{minipage}[c]{10pt}
    \centering $\equiv$
  \end{minipage}
  \begin{minipage}[c]{70pt}
    \centering\scalebox{0.6}{\epsfbox{figure.123}}
  \end{minipage}
\caption{The proof net $G_\Succ$ (successor)}
\label{fig:Successor}
\end{center}
\end{figure}

\paragraph{Unity and Erasure.}
%	Figure~\ref{fig:Unity} introduces the closed proof net Unity that
%	we identify as $\star$.
%         and which corresponds to the $\lambda$-term $\oc{(\lambda x.x)}$.
%        Figure~\ref{fig:Erasure} introduces the proof net $\erase$ with two occurrences of $\star$ in it.
% 	It corresponds to $n\oc{(\lambda x.x)}\oc{(\lambda x.x)}$.
	Figure~\ref{fig:Unity} introduces the closed proof net Unity that
	we identify as $\star$.
	Figure~\ref{fig:Erasure} introduces the proof net $\erase$ with two occurrences of $\star$ in it.
$\erase\smile\tuple{\overline n}$ reduces to $\star$,
because
(i) the conclusion of the box of the topmost $\star$ is plugged into the root of the bounded spine of $\tuple{\overline{n}}\equiv\overline{n}$, and (ii) by means of $\recutelExp$, we get $n$ instances of the identity, ``iteratively'' applied to the lower occurrence of $\star$, which, after some reduction transform to $\star$.

\begin{figure}[ht]
\begin{center}
\begin{tabular}{rl}
  %%%%% Unity
  \subfigure[]{
  \begin{minipage}[c]{70pt}
    \centering\scalebox{0.6}{\epsfbox{figure.114}}
  \end{minipage}
  \begin{minipage}[c]{10pt}
    \centering $\equiv$
  \end{minipage}
  \begin{minipage}[c]{60pt}
    \centering\scalebox{0.6}{\epsfbox{figure.115}}
  \end{minipage}
  \label{fig:Unity}
  }
&
  %%%%% Erasure
  \subfigure[]{
  \begin{minipage}[c]{110pt}
    \centering\scalebox{0.6}{\epsfbox{figure.112}}
  \end{minipage}
  \begin{minipage}[c]{10pt}
    \centering $\equiv$
  \end{minipage}
  \begin{minipage}[c]{70pt}
    \centering\scalebox{0.6}{\epsfbox{figure.113}}
  \end{minipage}
  \label{fig:Erasure}
  }
\end{tabular}
\caption{The unity proof net ($\star$) and the erasure proof net ($\erase$)}
\label{fig:Unity-and-Erasure}
\end{center}
\end{figure}

\paragraph{Diagonal.}
Figure~\ref{fig:Diagonal} introduces the proof net $\diagonal$ that duplicates a given unary string.
$\diagonal\smile\tuple{\overline n}$ reduces to $\tuple{\overline n,\overline n}$ because the conclusion of the box in the definition of $\diagonal$ is plugged into the root of the bounded spine in $\tuple{\overline{n}}\equiv\overline{n}$.
% \condinc{The net corresponds to
% $n(\oc{\lambda \tuple{x_1, x_2}.
% \tuple{M_\Succ\, x_1, M_\Succ\, x_2}})
% \tuple{\overline{0},\overline{0}}$
% where $M_\Succ$ is the $\lambda$-term that corresponds to the net $G_\Succ$.
% }
% {}%only in the full version
\par

There is an obvious generalization $m\diagonal$ of $\diagonal$, with $m\geq 2$, necessary to represent the composition of functions in \PR. $m\diagonal$ has a single premise and a single conclusion, and it multiplies a given argument $m$ times.
% \condinc{
% 	It corresponds to the $\lambda$-term
% 	$n(\lambda \tuple{x_1,\ldots,x_m}.
% 	\tuple{M_\Succ\, x_1, \ldots, M_\Succ\, x_m})
% 	\tuple{\overline{0},\ldots, \overline{0}}$, with $m$ occurrences of $\overline{0}$.
% }
% {}%only in the full version
\par

\begin{figure}[ht]
\begin{center}
%%%%% Diagonal
  \begin{minipage}[c]{220pt}
    \centering\scalebox{0.6}{\epsfbox{figure.110}}
  \end{minipage}
  \begin{minipage}[c]{10pt}
    \centering $\equiv$
  \end{minipage}
  \begin{minipage}[c]{60pt}
    \centering\scalebox{0.6}{\epsfbox{figure.111}}
  \end{minipage}
\caption{The proof net $\diagonal$ (duplicator)}
\label{fig:Diagonal}
\end{center}
\end{figure}

\paragraph{Projections.}
Figure~\ref{fig:Projection} defines the proof net $G_{\proj^{m}_{i}}$ that implements
the basic primitive recursive function $\pi^{m}_{i}$, for any $m\geq 1$, with $1\leq i\leq m$.
The proof net has the expected behavior thanks  to the correct behavior of $\erase$.
%\condinc{
%	The net corresponds to $(\lambda\tuple{x_1,\ldots,x_m}.x_i)x$.
%}
%{}%only in the full version

\begin{figure}[ht]
\begin{center}
%%%%% Projection
  \begin{minipage}[c]{130pt}
    \centering\scalebox{0.6}{\epsfbox{figure.120}}
  \end{minipage}
  \begin{minipage}[c]{10pt}
    \centering $\equiv$
  \end{minipage}
  \begin{minipage}[c]{70pt}
    \centering\scalebox{0.6}{\epsfbox{figure.121}}
  \end{minipage}
\caption{The proof net $G_{\proj^{m}_{i}}$}
\label{fig:Projection}
\end{center}
\end{figure}

\paragraph{Composition.}
Figure~\ref{fig:Composition} introduces the proof net $G_{\prcom{\funone}{\funtwo_1,\ldots,\funtwo_n}}$ that represents
the function $h:\N^m\rightarrow\N$ obtained by composition from $\funone:\N^n\rightarrow\N$
and $\funtwo_1,\ldots,\funtwo_n:\N^m\rightarrow\N$. Of course the proof nets $G_\funone,G_\funtwo$ must represent
the functions $\funone$, $\funtwo$.
% \condinc{
% 	The net corresponds to
% 	\[
% 	(\lambda\tuple{x_1,\ldots,x_m}.
% 	M_\funone \tuple{M_{\funtwo_1}\tuple{x_1,\ldots,x_m},\ldots,M_{\funtwo_n}\tuple{x_1,\ldots,x_m}})x
% 	\]
% 	where
% 	$M_\funone,M_{\funtwo_1},\ldots, M_{\funtwo_n}$ are the $\lambda$-terms
% 	that correspond to $G_\funone,G_{\funtwo_1},\ldots,G_{\funtwo_n}$.
% 	Notice that the required copies of $x_1,\ldots,x_m$ in the net are supplied by every occurrence of $n\diagonal$.
% }
% {}%only in the full version

\begin{figure}[ht]
\begin{center}
%%%%% Composition
  \begin{minipage}[c]{140pt}
    \centering\scalebox{0.6}{\epsfbox{figure.128}}
  \end{minipage}
  \begin{minipage}[c]{10pt}
    \centering $\equiv$
  \end{minipage}
  \begin{minipage}[c]{60pt}
    \centering\scalebox{0.6}{\epsfbox{figure.129}}
  \end{minipage}
\caption{The proof net $\prtopn{\prcom{\funone}{\funtwo_1,\ldots,\funtwo_n}}$}
\label{fig:Composition}
\end{center}
\end{figure}

\paragraph{Primitive Recursion.}
We assume that the function $\funone:\N^m\rightarrow\N$ can be represented by the proof
net $G_\funone$, and that $\funtwo:\N^{m+2}\rightarrow\N$ can be represented by the proof net $G_\funtwo$.
Figure~\ref{fig:PrimRec} introduces the proof net $G_{\prrec{\funone}{\funtwo}}$.
We shall show that $G_{\prrec{\funone}{\funtwo}}$ represents the
$(m+1)$-ary primitive recursion $\prrec{\funone}{\funtwo}$, as defined by equations (\ref{eqn:prec}).
$G_{\prrec{\funone}{\funtwo}}$ is built using two more proof nets $G_1$ and $G_2$.
% \condinc{
% 	The net corresponds to:
% 	$$
% 	M_{\prrec{\funone}{\funtwo}}\equiv
% 	(\lambda\tuple{x_0,\ldots,x_m}.
% 	M_{\pi^2_2}(x_0(\oc{M_2})M_1)\tuple{x_1,\ldots,x_m})x\\
% 	$$
% 	where $M_{\pi^2_2}, M_2, M_1$ are the
% 	$\lambda$-terms that represent the nets
% 	$\prtopn{\pi^2_2}, G_2, G_1$, respectively.
% }
% {}%only in the full version
%%%
\begin{figure}[ht]
\begin{center}
%%%%% Primitive recursion
  \begin{minipage}[c]{130pt} %era 150
    \centering\scalebox{0.6}{\epsfbox{figure.126}}
  \end{minipage}
  \begin{minipage}[c]{10pt}
    \centering $\equiv$
  \end{minipage}
  \begin{minipage}[c]{50pt}  %era 60
    \centering\scalebox{0.6}{\epsfbox{figure.127}}
  \end{minipage}
  \qquad
%%%G_f'
	\begin{minipage}[c]{110pt} %era 120
    \centering\scalebox{0.6}{\epsfbox{figure.130}}
  \end{minipage}
  \begin{minipage}[c]{10pt}
    \centering $\equiv$
  \end{minipage}
  \begin{minipage}[c]{50pt} %era 60
    \centering\scalebox{0.6}{\epsfbox{figure.131}}
  \end{minipage}
\caption{The proof nets $\prtopn{\prrec{f}{g}}$ and $G_1$}
\label{fig:PrimRec}
\end{center}
\end{figure}
%%%
The base function $G_1$ of $\prtopn{\prrec{f}{g}}$ is in Figure~\ref{fig:PrimRec}.
$G_1$ is a closed net that, after an application, yields a function $f'$ that, taking a tuple $\overrightarrow{n}$ of $m$ integers, gives a pair of naturals. A little bit more formally, $f'\left(\overrightarrow{n}\right)=\tuple{0, \funone\left(\overrightarrow{n}\right)}$.
% \condinc{
% 	It corresponds to the term $M_2\equiv\lambda t.\tuple{\overline{0},M_\funone\, t}$, where
% 	$M_\funone$ represents the net $\prtopn{\funone}$.
% }
% {}%only in the full version
%%%%
Instead, the net $G_2$ in Figure~\ref{fig:SPrimRec} is the \emph{iterable  step function} of $\prtopn{\prrec{f}{g}}$.
Iterable means that $G_2$ can be used as the first argument
of a Church numeral. $G_2$ maps a net that behaves as a function with an ideal type $\N^m\rightarrow\N^2$ to a net with the same type.
More precisely, $g'\left(\tuple{n_0,h},\overrightarrow n\right)=\tuple{n_0+1,g\left(n_0,h\left(\overrightarrow n\right),\overrightarrow n\right)}$.
In particular, the iteration starts from the result of $G_1$, morally having the right type.
% \condinc{
% 	$G_2$ corresponds to the above term $M_{\funtwo'}$ that essentially, after for the duplications, is:
% 	\small
%	\begin{multline*}
%	\lambda it.
%	(\lambda \tuple{t_1,\ldots,t_m}.
%	 (\lambda \tuple{t_1',t_1''}\ldots\tuple{t_m',t_m''}.
%  	(\lambda \tuple{w_1,w_2}.
%	    	(\lambda \tuple{w_1',w_1''}.\\
%  	   \tuple{M_\Succ\, w_1',M_\funtwo\, \tuple{w_2'',t_1'',\ldots,t_m''}}
%    	)(\diagonal\, w_1)
%	  )(i\,\tuple{t_1',\ldots,t_m'})
%	 )(\diagonal\, t_1)\ldots(\diagonal\, t_m)
%	)t
%	\end{multline*}
% 	$$ \lambda h.\lambda\tuple{n_0,n_1,\ldots,n_m}.\tuple{M_s n_0,M_g\tuple{n_0,h\tuple{n_1,\ldots,n_m},n_1,\ldots,n_m}}$$
% 	\normalsize
% 	where $M_\funtwo$ represents the net $\prtopn{\funtwo}$.
% }
% {}%only in the full version

\begin{figure}[ht]
\begin{center}
%%%%% Step function of the Primitive recursion
  \begin{minipage}[c]{200pt}
    \centering\scalebox{0.6}{\epsfbox{figure.124}}
  \end{minipage}
  \begin{minipage}[c]{10pt}
    \centering $\equiv$
  \end{minipage}
  \begin{minipage}[c]{70pt}
    \centering\scalebox{0.6}{\epsfbox{figure.125}}
  \end{minipage}
\caption{The proof net $G_2$}
\label{fig:SPrimRec}
\end{center}
\end{figure}

Now, we want to prove that $G_{\prrec{\funone}{\funtwo}}\smile\tuple{\overline{n_0},\overline{n_1},\ldots,\overline{n_m}}$ reduces to the expected result $\overline{\prrec{f}{g}\left(n_0,\overrightarrow n\right)}$. The keypoint is the application of $G_2$, inside $\prtopn{\prrec{\funone}{\funtwo}}$, as an argument of $n_0$. If $n_0=0$, then the whole box around $G_2$ is erased. This returns the identity. But, for every $n_0>0$, $G_2$ is replicated $n_0$ times by the bounded spine $n_0X$ of $\overline{n_0}$. This iterates $n_0$ times the function $g'$, leading to a new function $g'^{(n_0)}$.
Now, the behavior of the net becomes obvious.
$g'^{(n_0)}$ takes $\tuple{0,f}$ and $\overrightarrow n$ as its arguments.
Then, it iterates the step function. Finally, the result has form $\tuple{n_0,g'^{(n)}\left(\tuple{n_0,f},\overrightarrow n\right)}$.
An application of the right projection gives the result.
%\begin{figure}[ht]
%\begin{center}
%%%%%% Primitive recursion
%  \begin{minipage}[c]{120pt}
%    \centering\scalebox{0.6}{\epsfbox{figure.130}}
%  \end{minipage}
%  \begin{minipage}[c]{10pt}
%   \centering $\equiv$
%  \end{minipage}
%  \begin{minipage}[c]{60pt}
%    \centering\scalebox{0.6}{\epsfbox{figure.131}}
%  \end{minipage}
%\caption{The proof net $G_{\funone'}$}
%\label{fig:BPrimRec}
%\end{center}
%\end{figure}

\begin{proof}[Proof of Theorem \ref{theorem:Completeness}.]
By induction on the structure of $\funone$. The possible structures to be considered are: 0, $\Succ$, $\pi^m_i$, $\prcom{\funthree}{\funtwo_1,\ldots,\funtwo_n}$ and $\prrec{\funthree}{\funtwo}$.
%%%%%%%%%
If $\funone\equiv 0$ (base case), the statement holds by the reflexivity of $\recutel^*$.
%%%%%%%%%%
If $\funone\equiv\Succ$ or $\pi^m_i$, we already know that these functions can be represented.
%%%%%%%%%%
If $\funone\equiv\prcom{\funthree}{\funtwo_1,\ldots,\funtwo_n}$, by inductive hypothesis the functions $\prtopn{\funthree}$ and $\prtopn{\funtwo_i}$ can be represented. And the composition is representable.
%%%%%%%%%%
Similarly, if $\funone\equiv\prrec{\funthree}{\funtwo}$, by inductive hypothesis the functions $\prtopn{\funthree}$ and $\prtopn{\funtwo}$ can be represented. So their primitive recursion $\prrec{\funthree}{\funtwo}$ is representable too.
\end{proof}
}
{
%VERSIONE CORTA
Our comments at the end of Section~\ref{section:Pure proof nets} should have convinced the reader about
the impossibility of proving completeness by the usual encoding of the untyped lambda-calculus into pure proof nets.
We really need to tailor the encoding of data and programs in such a way that superlazy reduction \emph{works} on
them, i.e., we want to be sure that a program applied to an argument \emph{reduces} to the intended result.

The details of the completeness proof do not fit here, but they can be found in~\cite{longversion}, where we also also develop a slightly more general proof of soundness. Only the
major ingredients towards completeness are reported here.

The first ingredient is the representation of natural numbers.
Figure~\ref{fig:ChurchNumeralAsPN} introduces the closed proof nets $\overline{0}$ and
$\overline{n}$, with $n\geq 1$.
\begin{figure}[ht]
\begin{center}
  %%%%% Church numeral n=0
  \begin{minipage}[c]{27pt}
    \centering\scalebox{0.5}{\epsfbox{figure.1041}}
  \end{minipage}
  \begin{minipage}[c]{10pt}
    \centering $\equiv$
  \end{minipage}
  \begin{minipage}[c]{33pt}
    \centering\scalebox{0.5}{\epsfbox{figure.1051}}
  \end{minipage}
  \hspace{10pt}
\hspace{30pt}
  \hspace{10pt}
  %%%%% Church numeral n>0
  \begin{minipage}[c]{59pt}
    \centering\scalebox{0.5}{\epsfbox{figure.104}}
  \end{minipage}
  \begin{minipage}[c]{10pt}
    \centering $\equiv$
  \end{minipage}
  \begin{minipage}[c]{33pt}
    \centering\scalebox{0.5}{\epsfbox{figure.105}}
  \end{minipage}
\caption{Church numerals as proof nets}
\label{fig:ChurchNumeralAsPN}
\end{center}
\end{figure}
A basic observation to prove the completeness is that every $\overline{n}$ contains a
$n$-bounded spine $nX$ (see Fig.~\ref{fig:Derelicting tree-b} ).
This implies that whenever $\overline{n}$ is applied to a box containing a proof net
$G$, superlazy reduction will copy the box $n$ times and $n$ copies of $G$ will appear,
one applied to the next one. Computing the successor of a natural number can be done by
a proof net, quite similarly to what happens in ordinary lambda calculus.
Now, at least, the key notion can be given. Let $G$ be a proof net with a single premise and a single conclusion.
We write $G\smile\tuple{\overline{n_1},\ldots,\overline{n_m}}$ to mean the closed proof net with
a single conclusion obtained by plugging the conclusion of the net
$\tuple{\overline{n_1},\ldots,\overline{n_m}}$ (obtained by ``tensoring'' together $\overline{n_1},\ldots,\overline{n_m}$)
into the unique assumption of $G$.
For every function $\funone:\N^m\rightarrow\N$, with arity $m\geq 0$, we shall say
that a proof net $\prtopn{\funone}$ with one premise \emph{represents} $\funone$ iff
$\prtopn{\funone}\smile\tuple{\overline{n_1},\ldots,\overline{n_m}}$ \emph{superlazily} reduces
to $\overline{\funone(n_1,\ldots,n_m)}$, for every $n_1,\ldots,n_m\in\N$.\par
The second ingredient is the possibility of freely duplicate data, i.e., natural numbers.
This is possible even if the natural number being copied (or erased) does not lie inside
a box.
%The generation of, say, $m$ copies of the same $\overline{n}$ is based
%on using $\overline{n}$ itself to iterate a tensor $\tuple{\succ,
%\ldots, \succ}$ with $m$ occurrences of $\succ$, the successor
%on Church numerals, each of which is enclosed inside a box. The starting
%point of the iteration is the $m$-long tensor $\tuple{\overline{0},
%\ldots, \overline{0}}$.
Copying a natural number $n$ involves applying a (boxed) pair of successors and a pair
of $\overline{0}$ to the net $\overline{n}$.
Erasing $n$, on the other hand, can be
performed by applying a boxed identity and another boxed identity to $\overline{n}$: this
way $\overline{n}$ can be reduced itself to a boxed identity, which can be erased by
cutting it against a $W$ node.

The first two ingredients allow to encode basic primitive recursive functions and
composition. To get the most important construction, namely primitive recursion itself,
a third ingredient is necessary, namely iteration. Iterating $n$ times a given function $f$,
where $\overline{n}$ is a parameter, can be done by way of superlazy reduction:
putting the proof net representing $f$ inside a box and apply the
box to $\overline{n}$ suffices. However, if the proof net representing $f$ has some
premises, the resulting proof net would only accept boxed natural numbers as arguments,
and this would break the scheme which makes superlazy reduction works. The solution consists
in iterating only closed functions, exploiting the higher-order nature of proof nets.
The usual primitive recursion scheme can be finally obtained by using the standard technique of
simulating recursion by iteration.

\begin{theorem}[Completeness]
\label{theorem:Completeness}
Every $\funone$ in \PR\ is represented by a proof net $\prtopn{\funone}$.
\end{theorem}
}

%%%%%%%%%%%%%%%%%%%%%%%%%%%%%%%%%%%%%%%%%%%%%%%%%%%%%%%%%%%
\section{Further Developments}\label{section:FurtherDevelopments}
%%%%%%%%%%%%%%%%%%%%%%%%%%%%%%%%%%%%%%%%%%%%%%%%%%%%%%%%%%%
As we explained in the Introduction, this paper is just the first
step in a long-term study about how to control modal impredicativity.

The are at least two distinct research directions the authors are
following at the time of writing. We recall the obtained results here,
pointing to further work for additional details and proofs.

First of all, the way we proved completeness of proof net reduction w.r.t. primitive
recursion suggests a way of capturing Hofmann's non-size increasing
computation~\cite{hofmann99lics} by way of a proper, further restriction to
superlazy reduction. We basically need two constraints:
\begin{varitemize}
\item
  Boxes can only be copied when they are closed, i.e., when they have no premises.
\item
  Any box $b$ can only be copied if the proof net contained in $b$
  only contains \emph{one} node $X$ at level $0$.
\end{varitemize}
The obtained reduction relation is said to be the \emph{non-size increasing superlazy
reduction} and is denoted with $\Rightarrow$.
With these constraints, non-size increasing
polytime computation can be simulated by pure proof nets. Moreover:
\begin{theorem}
For every $n\in\N$ there is a polynomial
$p_n:\N\rightarrow\N$ such that for every proof net $\Pi$,
$\dlength{G}{\Rightarrow},\dsize{G}{\Rightarrow}\leq p_{\partial(G)}(|G|)$
\end{theorem}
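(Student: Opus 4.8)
The plan is to replay the development culminating in Corollary~\ref{cor:mainbound}, keeping the same skeleton --- a notion of $(n,d)$-box reduction tailored to $\Rightarrow$, a family of bounding functions defined by recursion on the depth $d$, and a double induction on $d$ and on the length of the witness list of level-$0$ boxes --- but replacing the family $f_d$ by a family of genuine polynomials $p_d$. Since $\Rightarrow$ is a restriction of $\recutel$, every reduction $G\Rightarrow^{*}H$ is still a $(|G|,\partial(G))$-box reduction, so it suffices to bound $(n,d)$-box reductions by $p_d$ and then set $g_d(n)=p_d(n)$ exactly as in Corollary~\ref{cor:mainbound}. The whole gain over the primitive recursive bound must therefore come from a single place: the size estimate for the reduct produced when a box is copied.

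First I would isolate the effect of the two non-size-increasing constraints on the copy step $\recutelExp$. In the proof of Proposition~\ref{prop:mainbound} the blow-up came entirely from the inequality $|J|\leq 2|F|^{2}\leq 2f_{d+1}(n,m)^{2}$: copying a box created one copy of its content per leaf of the facing derelicting tree, each of size up to $m$, so the size was \emph{multiplied}. The key lemma I would prove is that under $\Rightarrow$ this multiplication cannot occur: a box may be copied only when it is closed and its content has exactly one $X$-node at level $0$, and together these force the copy step to be non-size-increasing, i.e. $|J|\leq|F|$. Closedness guarantees that no premise is duplicated when the box is split, so the opened copies are independent; the single level-$0$ contraction guarantees that the copies are threaded through one point and that their number is exactly the number of derelictions already present in the facing tree, so that the nodes created by the step are paid for by the nodes of the derelicting tree the step consumes. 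The cleanest way to make this rigorous is to exhibit a \emph{potential} measure on pure proof nets, in the spirit of Hofmann's non-size-increasing computation, that dominates $|G|$ and is non-increasing along every $\Rightarrow$-step; the two constraints are precisely what is needed to verify the inequality for the $\recutelExp$ and the $\longrightarrow_N$ steps. In particular this already gives $\dsize{G}{\Rightarrow}\leq|G|$, so the size half of the statement is immediate and only the length needs the recurrence.

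With $|J|\leq|F|\leq m$ in hand, the size argument fed to the inner, depth-$d$ recursive call is decoupled from the accumulated reduction length, and the recurrence defining $f_d$ collapses to a polynomial one. Writing $q_{d+1}(n,m)$ for the bound on an $(n,d+1)$-box reduction and decomposing a reduction copying its last relevant box as $G\Rightarrow^{i}F\recutelExp J\Rightarrow^{j}H$, the tail $J\Rightarrow^{j}H$ is a $(|J|,d)$-box reduction bounded by $p_d(|J|)\leq p_d(m)$, whence $q_{d+1}(n+1,m)\leq q_{d+1}(n,m)+1+p_d(m)$. Since the witness list has length $r\leq n\leq m$, unrolling gives $p_{d+1}(m)=q_{d+1}(m,m)\leq 2m+m\cdot p_d(m)$, with $p_0(m)=m$. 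Each $p_d$ is then a polynomial of degree about $d+1$, and as $d=\partial(G)$ is fixed this is exactly the family $p_{\partial(G)}$ required by the statement.

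The main obstacle is the lemma of the second paragraph: establishing rigorously that the two constraints make the copy step non-size-increasing. This is where Hofmann's intuition must be turned into a combinatorial invariant on \emph{pure} proof nets, and it is delicate because the derelicting tree, its $W$-leaves and its $N$-nodes all have to be accounted for simultaneously in the potential. Moreover, exactly as in the soundness argument, the merging rule $\recutelM$ can fuse residuals of distinct boxes, so one must check that the potential survives an $M$-step --- equivalently, that one may push $\recutelM$ to the end without disturbing the bound, reusing the residual$^{*}$ bookkeeping already set up for Proposition~\ref{prop:mainbound}. Once the non-size-increase of a single copy step is secured, the remaining inductive bookkeeping is a routine simplification of that proof.
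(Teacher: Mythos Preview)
The paper does not prove this theorem: it is stated in Section~\ref{section:FurtherDevelopments} among results whose ``details and proofs'' are explicitly deferred to further work. So there is nothing in the paper to compare your proposal against; I can only assess the sketch on its own.

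Your overall plan --- rerun the double induction behind Proposition~\ref{prop:mainbound} and Corollary~\ref{cor:mainbound}, replacing the quadratic blow-up $|J|\leq 2|F|^{2}$ at the copy step by a tame bound --- is the natural one, and the recurrence of your third paragraph does collapse to polynomials once such a bound is available. The difficulty is that the specific bound you assert, $|J|\leq|F|$, is false. Take a closed box whose content has size $b$ and a single level-$0$ $X$-node (nothing prevents $b$ from being large: the content may contain many non-$X$ nodes, or many $X$-nodes at deeper levels), and let it face a derelicting tree made of $k-1$ $X$-nodes with $k$ $D$-leaves; this is a legitimate derelicting tree for any $k$, since every root-to-$D$-leaf path has cost $-1$ and every proper prefix has cost $0$. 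The $\recutelExp$ step consumes the box and the tree, roughly $b+2k$ nodes, and produces $k$ opened copies of the content, roughly $kb$ nodes, so the size increases by about $(k-1)(b-2)$, which is unbounded. Your justification conflates two unrelated objects: the single $X$-node \emph{inside} the box and the derelicting tree \emph{outside} it. The inside constraint limits what each opened copy can do \emph{later}; it says nothing about how many copies are produced \emph{now}, which is governed solely by the facing tree. For the same reason the sentence ``this already gives $\dsize{G}{\Rightarrow}\leq|G|$'' is unwarranted: even if a non-increasing potential $\Phi\geq|\cdot|$ exists, you only get $\dsize{G}{\Rightarrow}\leq\Phi(G)$, and the counterexample forces any such $\Phi$ to be super-linear in $|G|$.

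So the polynomial bound cannot be read off a one-step size inequality. The single-$X$ constraint has to be exploited \emph{globally}: each opened copy contributes exactly one new contraction at the current level, so what is controlled is the number of level-$0$ $X$-nodes --- hence the stock of future derelicting trees, hence the total number of future copy steps at that level --- rather than the raw size of the reduct. A correct argument needs a budget of that kind, tracked across levels, and this is substantially more than ``a routine simplification'' of Proposition~\ref{prop:mainbound}. Your proposal locates the right place to intervene in the soundness argument but formulates the wrong invariant there.
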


The second direction we are considering concerns methods to keep modal
impredicativity under control by more traditional static methods in the spirit
of light logics. Consider the equivalence between $!A$ and $!A\otimes A$.
As already pointed out, this equivalence is somehow necessary to get modal
impredicativity. So, controlling it means controlling modal impredicativity.
Now, suppose that the above equivalence holds \emph{but} $A$ only contains
instances of the $!$ operator which are \emph{intrinsically different} from the top-level one
in $!A$. Morally, this would imply that even if a contraction node is ``in'' $A$,
it cannot communicate with the box in $!A$, because they are of a different nature.
This way modal impredicativity would be under control. But the question is: how
to distinguish different instances of $!$ from each other? One (naive) answer
is the following: consider a generalization of (multiplicative and exponential)
Linear Logic where syntactically different copies $!_{a_1},!_{a_2},\ldots$
of the modal operator $!$ are present.
As an example, take the set $\{!_n\}_{n\in\N}$. Then, impose the following constraint:
$!_aA$ is a \emph{legal} formula only if the operators $!_{b_1},\ldots,!_{b_n}$ appearing
in $A$ are all different from $a$, i.e., if $a\not\in\{b_1,\ldots,b_n\}$. We strongly
believe that this way a system enjoying properties similar to those of predicative
recurrence schemes~\cite{Leivant93popl} can be obtained.

%%%%%%%%%%%%%%%%%%%%%%%%%%%%%
\section{Conclusions}
\label{section:Conclusions}
%%%%%%%%%%%%%%%%%%%%%%%%%%%%%
We described modal impredicativity and a concrete
tool --- superlazy reduction --- that controls it.
Superlazy reduction on pure proof nets
greatly influences the expressive power of the underlying computational
model: from a Turing complete model we go down to first-order primitive
recursive functions.

In a sentence, we learn that the expressive
power of a programming language system can be controlled by acting on the
dynamics (i.e., the underlying reduction relation) without touching the
statics (i.e., the language into which programs are written). To get the complete picture, however, 
we still need tools to predict which set of programs will be useful from a computational
point of view.

This could have potential applications in the field of implicit computational
complexity, where one aims at designing programming languages and logical
systems corresponding to complexity classes. Indeed, the impact of ICC in
applications crucially depends on the intensional expressivity of the
proposed systems: one should be able to write programs naturally.

%Pure proof nets allow to write the fix point combinator $Y$ of the lambda-calculus.
%Are there some conditions we can put on an argument $M$ of $Y$ so that the superlazy
%normalization computes the normal form of $Y M$?

\bibliographystyle{plain}
\bibliography{teiparti}

\end{document}